\renewcommand{\orcidID}[1]{} 
\newcommand{\onestep}{\leftrightarrow}
\newcommand{\sevstep}{\leftrightsquigarrow}
\newcommand{\TAR}[1]{\mathsf{TAR}(#1)}
\newcommand{\TS}{\mathsf{TS}}
\newcommand{\TJ}{\mathsf{TJ}}
\newcommand{\TARrule}{\mathsf{TAR}}
\newcommand{\symdiff}[2]{#1 \vartriangle #2}
\newcommand{\localqed}{\hfill $\diamondsuit$}
\newcommand{\mw}{\ensuremath{\mathrm{mw}}}
\newcommand{\nd}{\ensuremath{\mathrm{nd}}}
\spnewtheorem{observation}[theorem]{Observation}{\bfseries}{\itshape}
\spnewtheorem{myclaim}[theorem]{Claim}{\itshape}{\upshape}
\newcommand{\figdir}{.} 
\newcommand{\figref}[1]{\figurename~\ref{#1}}
\definecolor{lightblue}{rgb}{0.5,0.5,1.0}
\definecolor{darkred}{rgb}{0.8,0,0}
\definecolor{darkgreen}{rgb}{0,0.5,0}
\definecolor{darkblue}{rgb}{0,0,0.5}
\title{Independent Set Reconfiguration Parameterized by Modular-Width\thanks{%
Partially supported by JSPS and MAEDI under the Japan-France Integrated Action Program (SAKURA) Project GRAPA 38593YJ,
and by JSPS/MEXT KAKENHI Grant Numbers JP24106004, JP17H01698, JP18K11157, JP18K11168, JP18K11169, JP18H04091, JP18H06469.}}
\author{R\'{e}my Belmonte\inst{1}\orcidID{0000-0001-8043-5343} \and
Tesshu Hanaka\inst{2}\orcidID{0000-0001-6943-856X} \and
Michael Lampis\inst{3}\orcidID{0000-0002-5791-0887} \and
Hirotaka Ono\inst{4}\orcidID{0000-0003-0845-3947} \and
Yota Otachi\inst{5}\orcidID{0000-0002-0087-853X}}
\authorrunning{Belmonte et al.}
\institute{
The University of Electro-Communications, Chofu, Tokyo, 182-8585, Japan\\
\email{remy.belmonte@uec.ac.jp}
\and
Chuo University, Bunkyo-ku, Tokyo, 112-8551, Japan\\
\email{hanaka.91t@g.chuo-u.ac.jp}
\and
Universit\'{e} Paris-Dauphine, PSL University, CNRS, LAMSADE 75016, Paris, France\\
\email{michail.lampis@dauphine.fr}
\and
Nagoya University, Nagoya, 464-8601, Japan\\
\email{ono@nagoya-u.jp}
\and
Kumamoto University, Kumamoto, 860-8555, Japan\\
\email{otachi@cs.kumamoto-u.ac.jp}}
\begin{document}

\maketitle              
\begin{abstract}
\textsc{Independent Set Reconfiguration} is one of the most well-studied problems in the setting of combinatorial reconfiguration.
It is known that the problem is PSPACE-complete even for graphs of bounded bandwidth.
This fact rules out the tractability of parameterizations by most well-studied structural parameters
as most of them generalize bandwidth.
In this paper, we study the parameterization by modular-width, which is not comparable with bandwidth.
We show that the problem parameterized by modular-width is fixed-parameter tractable
under all previously studied rules $\TARrule$, $\TJ$, and $\TS$.
The result under $\TARrule$ resolves an open problem posed by Bonsma~[WG 2014, JGT 2016].

\keywords{reconfiguration \and independent set \and modular-width.}
\end{abstract}


\section{Introduction}
In a reconfiguration problem,
we are given an instance of a search problem together with two feasible solutions.
The algorithmic task there is to decide whether one solution can be transformed to the other by a sequence of prescribed local modifications
while maintaining the feasibility of intermediate states.
Recently, reconfiguration versions of many search problems have been studied~(see \cite{vandenHeuvel13,Nishimura18}).

\textsc{Independent Set Reconfiguration} is one of the most well-studied reconfiguration problems.
In this problem, we are given a graph and two independent sets.
Our goal is to find a sequence of independent sets that represents a step-by-step modification
from one of the given independent sets to the other.
There are three local modification rules studied in the literature: 
Token Addition and Removal ($\TARrule$)~\cite{Bonsma16,KaminskiMM12,MouawadN0SS17},
Token Jumping ($\TJ$)~\cite{BonsmaKW14,BousquetMP17,ItoDHPSUU11,ItoKOSUY14,ItoKO14}, and
Token Sliding ($\TS$)~\cite{BelmonteKLMOS19,BonamyB17,DemaineDFHIOOUY15,Fox-EpsteinHOU15,HearnD05,HoangU16,LokshtanovM18}.
Under $\TARrule$, given a threshold $k$,
we can remove or add any vertices as long as the resultant independent set has size at least $k$.
(When we want to specify the threshold $k$, we call the rule $\TAR{k}$.)
$\TJ$ allows to swap one vertex in the current independent set with another vertex not dominated by the current independent set.
$\TS$ is a restricted version of $\TJ$ that additionally asks the swapped vertices to be adjacent.

It is known that \textsc{Independent Set Reconfiguration} is PSPACE-complete under all three rules 
for general graphs~\cite{ItoDHPSUU11},
for perfect graphs~\cite{KaminskiMM12}, and for planar graphs of maximum degree~3~\cite{HearnD05} (see \cite{BonsmaKW14}).
For claw-free graphs, the problem is solvable in polynomial time under all three rules~\cite{BonsmaKW14}.
For even-hole-free graphs (graphs without induced cycles of even length),
the problem is known to be polynomial-time solvable under $\TARrule$ and $\TJ$~\cite{KaminskiMM12},
while it is PSPACE-complete under $\TS$ even for split graphs~\cite{BelmonteKLMOS19}.
Under $\TS$, forests~\cite{DemaineDFHIOOUY15} and interval graphs~\cite{BonamyB17} form maximal known subclasses of even-hole-free graphs
for which \textsc{Independent Set Reconfiguration} is polynomial-time solvable.
For bipartite graphs,
the problem is PSPACE-complete under $\TS$ 
and, somewhat surprisingly, it is NP-complete under $\TARrule$ and $\TJ$~\cite{LokshtanovM18}.

\textsc{Independent Set Reconfiguration} is studied also in the setting of parameterized computation.
(See the recent textbook~\cite{CyganFKLMPPS15} for basic concepts in parameterized complexity.)
It is known that there is a constant $b$ such that the problem is PSPACE-complete under all three rules
even for graphs of bandwidth at most $b$~\cite{Wrochna18}.
Since bandwidth is an upper bound of well-studied structural parameters such as pathwidth, treewidth, and clique-width,
this result rules out FPT (and even XP) algorithms with these parameters.
Given this situation, Bonsma~\cite{Bonsma16} asked whether \textsc{Independent Set Reconfiguration}
parameterized by modular-width is tractable under $\TARrule$ and $\TJ$.
The main result of this paper is to answer this question by presenting
an FPT algorithm for \textsc{Independent Set Reconfiguration} under $\TARrule$ and $\TJ$ parameterized by modular-width.
We also show that under $\TS$ the problem allows a much simpler FPT algorithm.

Our results in this paper can be summarized as follows:\footnote{%
The $O^*(\cdot)$ notation suppresses factors polynomial in the input size.}
\begin{theorem}
Under all three rules $\TARrule$, $\TJ$, and $\TS$,
\textsc{Independent Set Reconfiguration} parameterized by modular-width $\mw$
can be solved in time $O^{*}(2^{\mw})$.
\end{theorem}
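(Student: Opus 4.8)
The plan is to run a bottom-up dynamic program over the modular decomposition tree of $G$ (which is computable in polynomial time). We first reduce the rules to be handled: by the known equivalence between token jumping and token addition/removal~\cite{KaminskiMM12}, two independent sets of the same size $\ell$ are reconfigurable under $\TJ$ if and only if they are reconfigurable under $\TAR{\ell-1}$, so it suffices to give an FPT algorithm for $\TARrule$ with an arbitrary threshold $k$ and then to treat $\TS$ by a separate, simpler argument. Recall that a node $v$ of the modular decomposition tree corresponds to an induced subgraph $G_v$ that is a single vertex, a disjoint union $G_{v_1}\sqcup\dots\sqcup G_{v_r}$, a join $G_{v_1}+\dots+G_{v_r}$, or a substitution $H[G_{v_1},\dots,G_{v_t}]$ of the child subgraphs into a prime quotient $H$ on $t\le\mw$ vertices. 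Writing $A_v=I_s\cap V(G_v)$ and $B_v=I_t\cap V(G_v)$ for the restrictions of the two given independent sets $I_s,I_t$ to $G_v$, the answer to the instance is whether $A_{\mathrm{root}}\sevstep B_{\mathrm{root}}$ holds under $\TAR{k}$.

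The key observation is that, for a fixed pair $(A_v,B_v)$, the set of thresholds $k'$ for which $A_v\sevstep B_v$ holds under $\TAR{k'}$ in $G_v$ is downward closed, hence captured by a single number $\kappa_v=\kappa_v(A_v,B_v)$; moreover, projecting any reconfiguration of $G_v$ onto a child $G_{v_i}$ yields a $\TAR{m_i}$ reconfiguration from $A_{v_i}$ to $B_{v_i}$, where $m_i$ is the smallest number of tokens that child ever holds (and $m_i=0$ always works inside a single graph, since one may empty it and rebuild). Consequently, at each node it suffices to store a polynomially bounded table recording $\alpha(G_v)$, the sizes $|A_v|,|B_v|$, and reconfiguration thresholds, including the refinement ``the largest $m$ admitting a $\TAR{m}$-walk from $A_v$ to $B_v$ through some independent set of size at least $h$'' for each $h$ --- precisely what a parent needs in order to park this child at a high token count while working on its siblings. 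For a leaf the table is immediate. For a disjoint-union node the children are coupled only through the global budget, and the combination reduces to a small optimization over the children's tables (choosing the order in which to reconfigure them and how high to inflate the parked ones). For a join node every independent set of $G_v$ lies in a single child, so a reconfiguration with threshold at least one stays inside one child, and the combination is again immediate.

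The exponential factor enters only at prime nodes, where we pass to the quotient $H$. A reconfiguration of $G_v=H[G_{v_1},\dots,G_{v_t}]$ induces at each step a \emph{support pattern} $S\subseteq V(H)$, namely the set of indices whose child currently holds a token, and $S$ is always an independent set of $H$; there are at most $2^t\le2^{\mw}$ patterns. We build an auxiliary reachability structure on the patterns: an index may be added to $S$ for free, since putting a token into an empty child never violates the threshold, whereas removing an index requires emptying the corresponding child, which is affordable only when the remaining children can jointly carry at least $k$ tokens at that instant, and while $S$ is fixed we may redistribute tokens among its children subject to their capacities, their tables, and the global threshold. Combining a candidate sequence of patterns with the children's tables --- where the children emptied along the way impose no real constraint, so that only the ``persistent'' children contribute, via their thresholds --- yields a test for $\kappa_v(A_v,B_v)\ge k$; carrying this out over the relevant patterns and their orderings costs $O^{*}(2^{\mw})$.

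The main obstacle is to make the prime-node combination correct. One must prove that reconfigurations of $G_v$ admit a normal form in which tokens move between children in a controlled way --- in particular that internal reconfigurations of distinct children never need to be interleaved in a manner not summarized by the per-child tables --- and one must account carefully for the budget at the moments a child is emptied, which in general forces first inflating the other children as high as their tables allow. This is exactly where the dichotomy between the ``slack'' regime, in which the threshold is small enough that everything can be routed through a large independent set, and the ``tight'' regime, in which the support pattern must stay close to a maximum independent set of $H$, is used. Finally, $\TS$ is genuinely easier: under token sliding a token can leave a child only while that child holds exactly one token --- otherwise two adjacent children of $H$ would be occupied at once --- so the per-child counts are highly rigid; this rigidity collapses the case analysis, and enumerating the at most $2^{\mw}$ support patterns of $H$ together with a straightforward reachability computation among them already suffices to obtain the $O^{*}(2^{\mw})$ bound.
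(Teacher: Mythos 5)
Your high-level plan matches the paper's: reduce $\TJ$ to $\TARrule$ via Kami\'nski et al., use the modular decomposition, pay $2^{\mw}$ only when enumerating which modules are occupied, and treat $\TS$ separately and more simply. Your $\TS$ sketch in particular is essentially what the paper does (a module holding $\ge 2$ tokens traps them, so after a short reduction every module has at most one token and you do reachability on the quotient of size $\mw$).

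For $\TARrule$, however, your proposal has a genuine gap, and it is exactly the one you flag yourself (``The main obstacle is to make the prime-node combination correct\dots''). Two concrete problems. First, your per-node table is keyed to the fixed endpoints $(A_v,B_v)$ --- $\kappa_v(A_v,B_v)$ and the refinement ``largest $m$ with a $\TAR{m}$-walk from $A_v$ to $B_v$ through a set of size $\ge h$.'' But when a parent is deciding how to redistribute tokens while the support pattern is held fixed, the child is generally sitting at some \emph{intermediate} configuration, not at $A_{v_i}$ or $B_{v_i}$, and the parent needs to know how high that child can be inflated from wherever it currently is, under whatever threshold is currently induced on it; it may also need to inflate and deflate the same child several times. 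The paper's central object is therefore not a reachability threshold for the pair $(A_v,B_v)$ but the function $j\mapsto\lambda(G_v,A_v,j)$, the maximum size reachable from $A_v$ under $\TAR{j}$ --- together with a proof (Lemma~\ref{lem:lambda}) that one can maintain the invariant that every surviving module is always parked at a $\lambda$-optimal configuration for its current local threshold, so knowing $\lambda$ from the fixed start $A_{v_i}$ suffices. Your table neither stores $\lambda$ nor argues for such a normal form. Second, the combination at a ``prime'' node is not a clean DP formula over child tables in the paper; it is an iterative rule-based procedure (delete irrelevant vertices via Lemma~\ref{lem:delete}, improve the configuration in $M_0$ or in a module $M_i$) with eight invariants, followed by a separate reachability reduction (Lemmas~\ref{lem:empty}, \ref{lem:delete2}, \ref{lem:ndreach} and the connected/disconnected case split of Theorem~\ref{thm:TAR}). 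Showing that this terminates with the correct answer is where most of the work lies, and your write-up defers exactly that work to the reader. As it stands, the proposal is a correct outline of the strategy but not a proof: the key lemma you would need --- a normal form plus the right table ($\lambda$ rather than a pairwise threshold) making the pattern-BFS sound --- is missing.
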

In Section~\ref{sec:TAR}, we give our main result for $\TARrule$ (Theorem~\ref{thm:TAR}),
which implies the result for $\TJ$ (Corollary~\ref{cor:TJ}).
The FPT algorithm under $\TS$ is given in Section~\ref{sec:TS} (Theorem~\ref{thm:TS}).


\section{Preliminaries}
\label{sec:pre}

Let $G = (V,E)$ be a graph.  For a set of vertices $S \subseteq V$, we denote
by $G[S]$ the subgraph induced by $S$. 
For a vertex set $S \subseteq V$, we denote by $G-S$ the graph $G[V \setminus S]$. 
For a vertex $u\in V$, we write $G-u$ instead of $G - \{u\}$. 
For $u,v \in S$, we denote $S \cup \{u\}$ by $S + u$
and $S \setminus \{v\}$ by $S - v$, respectively.
We use $\alpha(G)$ to denote the size of
a maximum independent set of $G$. For two sets $S,R$ we use $\symdiff{S}{R}$ to
denote their symmetric difference, that is, the set $(S\setminus
R)\cup(R\setminus S)$. For an integer $k$ we use $[k]$ to denote the set
$\{1,\ldots,k\}$.
For a vertex $v \in V$, its (\emph{open}) \emph{neighborhood} is denoted by $N(v)$.
The \emph{open neighborhood} of a set $S \subseteq V$ of vertices is defined as $N(S) = \bigcup_{v \in S} N(v) \setminus S$.
A \emph{component} of $G$ is a maximal vertex set $S \subseteq V$ such that
$G$ contains a path between each pair of vertices in $S$.

In the rest of this section, we are going to give definitions of the terms used in the following formalization of the main problem:
\begin{itemize}
  \setlength{\itemsep}{0pt}
  \item[] \textbf{Problem:} \textsc{Independent Set Reconfiguration} under $\TAR{k}$
  \item[] \textbf{Input:} A graph $G$, an integer $k$,  and independent sets $S$ and $S'$ of $G$.
  \item[] \textbf{Parameter:} The modular-width of the input graph $\mw(G)$.
  \item[] \textbf{Question:} Does $S \sevstep_k S'$ hold?
\end{itemize}

\subsection{$\TAR{k}$ rule}

Let $S$ and $S'$ be independent sets in a graph $G$ and $k$ an integer.  Then
we write $S\stackrel{G}{\onestep}_k S'$ if $|\symdiff{S}{S'}|\le 1$ and
$\min\{|S|,|S'|\}\ge k$. If $G$ is clear from the context we simply write
$S\onestep_k S'$. Here $S \onestep_k S'$ means that $S$ and $S'$ can be
reconfigured to each other in one step under the $\TAR{k}$ rule, which stands
for ``Token Addition and Removal'', under the condition that no independent set
contains fewer than $k$ vertices (tokens).  We write
$S\stackrel{G}{\sevstep}_k S'$, or simply $S\sevstep_k S'$ if $G$ is clear, if
there exists $\ell\ge 0$ and a sequence of independent sets $S_0,\ldots,
S_{\ell}$ with $S_0=S$, $S_{\ell}=S'$ and for all $i\in[\ell]$ we have
$S_{i-1}\onestep_k S_i$. If $S\sevstep_k S'$ we say that $S'$ is
\emph{reachable} from $S$ under the $\TAR{k}$ rule.

We recall the following basic facts.
\begin{observation}
\label{obs:basic}
 For all integers $k$ the relation defined by
$\sevstep_k$ is an equivalence relation on independent sets of size at least
$k$. For any graph $G$, integer $k$, and independent sets $S,R$, if
$S\sevstep_k R$, then $S\sevstep_{k-1}R$.  For any graph $G$ and independent
sets $S,R$ we have $S\sevstep_0R$.
\end{observation}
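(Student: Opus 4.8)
The plan is to verify each of the three assertions directly from the definitions of $\onestep_k$ and $\sevstep_k$; all three are routine, so the "proof" is really just exhibiting the right reconfiguration sequences. First, for the equivalence-relation claim on independent sets of size at least $k$: reflexivity follows by taking $\ell = 0$ in the definition of $\sevstep_k$, so that the length-zero sequence $S_0 = S$ witnesses $S \sevstep_k S$ (this is where we use $|S| \ge k$). For symmetry, I would note that both conditions defining a single step $S_{i-1} \onestep_k S_i$, namely $|\symdiff{S_{i-1}}{S_i}| \le 1$ and $\min\{|S_{i-1}|, |S_i|\} \ge k$, are symmetric in the two sets; hence reversing a sequence $S_0, \dots, S_\ell$ witnessing $S \sevstep_k R$ yields a sequence witnessing $R \sevstep_k S$. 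For transitivity, given witnesses for $S \sevstep_k R$ and $R \sevstep_k T$, concatenating them (identifying the two copies of $R$) gives a witness for $S \sevstep_k T$.

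Second, for monotonicity in the threshold: if $S_{i-1} \onestep_k S_i$ then $\min\{|S_{i-1}|,|S_i|\} \ge k \ge k-1$, and the symmetric-difference condition is unchanged, so $S_{i-1} \onestep_{k-1} S_i$ as well. Thus any sequence witnessing $S \sevstep_k R$ is also a valid sequence witnessing $S \sevstep_{k-1} R$, and the claim follows without modifying the sequence at all.

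Third, for $S \sevstep_0 R$: here the size constraint $\min\{\cdot,\cdot\} \ge 0$ is vacuous, so the only thing to check at each step is that the current set is independent. I would first remove the vertices of $S$ one at a time to reach $\emptyset$, giving $S \sevstep_0 \emptyset$, and then add the vertices of $R$ one at a time to reach $R$, giving $\emptyset \sevstep_0 R$; every intermediate set in the first phase is a subset of $S$ and every one in the second phase a subset of $R$, hence independent. Transitivity (already established) then yields $S \sevstep_0 R$. There is no genuine obstacle in this observation; the only point that needs even a word of justification is that subsets of independent sets are independent, which keeps the intermediate sets feasible in the last argument.
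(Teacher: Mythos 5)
Your proof is correct, and it takes the only natural route: direct verification from the definitions of $\onestep_k$ and $\sevstep_k$. Reflexivity via the length-zero sequence, symmetry because the one-step relation is symmetric, transitivity by concatenation, monotonicity because $k \ge k-1$, and the $k=0$ case by passing through the empty set (using that subsets of independent sets are independent) are all sound. The paper itself states this observation without proof, simply labelling it a recollection of ``basic facts,'' so there is no authorial argument to diverge from; your write-up is exactly the routine justification the authors chose to omit.
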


\subsection{$\TJ$ and $\TS$ rules}

Under the $\TJ$ rule, one step is formed by a removal of a vertex and an addition of a vertex.
As this rule does not change the size of the independent set,
we assume that the given initial and target independent sets are of the same size.
In other words, two independent sets $S$ and $S'$ with $|S| = |S'|$
can be reconfigured to each other in one step under the $\TJ$ rule if $|\symdiff{S}{S'}| = 2$.
It is known that the $\TJ$ reachability can be seen as a special case of $\TARrule$ reachability as follows.
\begin{proposition}
[\cite{KaminskiMM12}]
\label{prop:TJ=TAR}
Let $S$ and $R$ be independent sets of $G$ with $|S| = |R|$.
Then, $R$ is reachable from $S$ under $\TJ$
if and only if $S \sevstep_{|S|-1} R$.
\end{proposition}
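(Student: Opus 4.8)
The plan is to prove the two implications separately, writing $k=|S|=|R|$ throughout; the ``only if'' direction is a routine simulation, and essentially all the content lies in the ``if'' direction.

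For ``only if'', suppose $R$ is reachable from $S$ under $\TJ$, witnessed by independent sets $S=I_0,I_1,\ldots,I_\ell=R$, each of size $k$, with $|\symdiff{I_{j-1}}{I_j}|=2$ for every $j$. Since $|I_{j-1}|=|I_j|$, the symmetric difference consists of one vertex $u_j\in I_{j-1}\setminus I_j$ and one vertex $v_j\in I_j\setminus I_{j-1}$, so $I_j=(I_{j-1}-u_j)+v_j$. I would then insert the intermediate independent set $I_{j-1}-u_j$, which has size $k-1$ and is a subset of $I_{j-1}$: this gives $I_{j-1}\onestep_{k-1}(I_{j-1}-u_j)\onestep_{k-1}I_j$. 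Concatenating these two-step gadgets over all $j$ produces a $\TAR{k-1}$-reconfiguration sequence from $S$ to $R$, hence $S\sevstep_{k-1}R$.

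For ``if'', suppose $S\sevstep_{k-1}R$ via a sequence $S=S_0,\ldots,S_\ell=R$ of independent sets, each of size at least $k-1$. I would first bring this sequence into a normal form by repeatedly applying two operations: (i) if $S_i=S_j$ for some $i<j$, delete $S_{i+1},\ldots,S_j$ (still a valid $\TAR{k-1}$-sequence from $S$ to $R$, and strictly shorter); and (ii) if some member has size exceeding $k$, pick one, say $S_i$, of maximum size $M>k$. Assuming (i) is no longer applicable, consecutive sets are distinct and hence differ by exactly one vertex; since $|S_0|=|S_\ell|=k<M$ the neighbours $S_{i-1},S_{i+1}$ exist and both have size $M-1$ (they cannot exceed $M$ by maximality), so $S_i=S_{i-1}+v$ and $S_{i+1}=S_i-w$ for some $v,w$. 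If $v=w$ then $S_{i-1}=S_{i+1}$, contradicting distinctness, so $v\ne w$ and $w\in S_{i-1}$. I would then replace $S_i$ by $S_i':=S_{i-1}-w$, an independent subset of $S_{i-1}$ of size $M-2\ge k-1$ satisfying $S_i'=S_{i+1}-v$ and $S_{i-1}\onestep_{k-1}S_i'\onestep_{k-1}S_{i+1}$. Operation (ii) keeps the length of the sequence fixed while decreasing the sum of the sizes of its members by $2$, and operation (i) strictly decreases the length, so ordering sequences lexicographically by (length, sum of sizes) shows the process terminates. At termination neither operation applies, so the outcome is a $\TAR{k-1}$-sequence $S=T_0,\ldots,T_m=R$ with pairwise distinct members, all of size in $\{k-1,k\}$.

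To conclude, note that in this normalized sequence consecutive members differ in exactly one vertex, so their sizes differ by exactly $1$; the sizes therefore strictly alternate, and since $|T_0|=|T_m|=k$ the length $m$ is even and $T_0,T_2,\ldots,T_m$ all have size $k$. Moving from $T_{2j}$ to $T_{2j+2}$ deletes one vertex and adds another, and these two vertices must be distinct, for otherwise $T_{2j}=T_{2j+2}$, contradicting pairwise distinctness; hence $|\symdiff{T_{2j}}{T_{2j+2}}|=2$ and $T_0,T_2,\ldots,T_m$ is a $\TJ$-sequence from $S$ to $R$. The only genuinely delicate point I anticipate is the peak-elimination operation (ii) together with its termination argument; everything else is bookkeeping about how adding or removing a single token changes an independent set.
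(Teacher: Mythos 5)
The paper states this proposition without proof, citing \cite{KaminskiMM12}; your argument is correct and is essentially the standard one from that source. The only-if direction is the routine two-step simulation, and for the if direction your peak-elimination normalization (with the lexicographic potential on length and total size) correctly forces all sizes into $\{k-1,k\}$, after which the alternation argument extracting the even-indexed sets as a $\TJ$ sequence goes through without gaps.
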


One step under the $\TS$ rule is a $\TJ$ step
with the additional constraint that the removed and added vertices have to be adjacent.
Intuitively, one step in a $\TS$ sequence ``slides'' a token along an edge.
We postpone the introduction of notation for $\TS$ until Section~\ref{sec:TS} to avoid any confusions.

\subsection{Modular-width}

In a graph $G=(V,E)$ a module is a set of vertices $M\subseteq V$ with the
property that for all $u,v\in M$ and $w\in V\setminus M$, if $\{u,w\}\in E$, then
$\{v,w\}\in E$. In other words, a module is a set of vertices that have the same
neighbors outside the module. A graph $G=(V,E)$ has \emph{modular-width} at
most $k$ if it satisfies at least one of the following conditions (i) $|V|\le
k$, or (ii) there exists a partition of $V$ into at most $k$ sets $V_1,
V_2,\ldots, V_s$, such that $G[V_i]$ has modular-width at most $k$ and $V_i$ is
a module in $G$, for all $i\in[s]$. We will use $\mw(G)$ to denote the minimum
$k$ for which $G$ has modular-width at most $k$. We recall that there is a
polynomial-time algorithm which, given a graph $G=(V,E)$ produces a non-trivial
partition of $V$ into at most $\mw(G)$ modules
\cite{CournierH94,HabibP10,TedderCHP08} and that deleting vertices from $G$ can
only decrease the modular-width. 
We also recall that \textsc{Maximum Independent Set} is solvable in time $O^*(2^{\mw})$.
Indeed, a faster algorithm with running time $O^*(1.7347^{\mw})$ is known~\cite{FominLMT18}.

A graph has neighborhood diversity at most $k$ if its vertex set can be
partitioned into $k$ modules, such that each module induces either a clique or
an independent set. We use $\nd(G)$ to denote the minimum neighborhood
diversity of $G$, and recall that $\nd(G)$ can be computed in polynomial time~\cite{Lampis12}
 and that $\nd(G)\ge \mw(G)$ for all graphs $G$
\cite{GajarskyLO13}.

It can be seen that the modular-width of a graph is not smaller than its clique-width.
On the other hand, we can see that treewidth, pathwidth, and bandwidth are not comparable to modular-width.
To see this, observe that the complete graph of $n$ vertices has treewidth $n-1$ and modular-width $2$,
and that the path of $n$ vertices has treewidth $1$ and modular-width $n$ for $n \ge 4$.
Our positive result and the hardness result by Wrochna~\cite{Wrochna18} together give \figref{fig:width-parameters}
that depicts a map of structural graph parameters with a separation of the complexity of \textsc{Independent Set Reconfiguration}.

\begin{figure}[bth]
  \centering
  \includegraphics[scale=0.75]{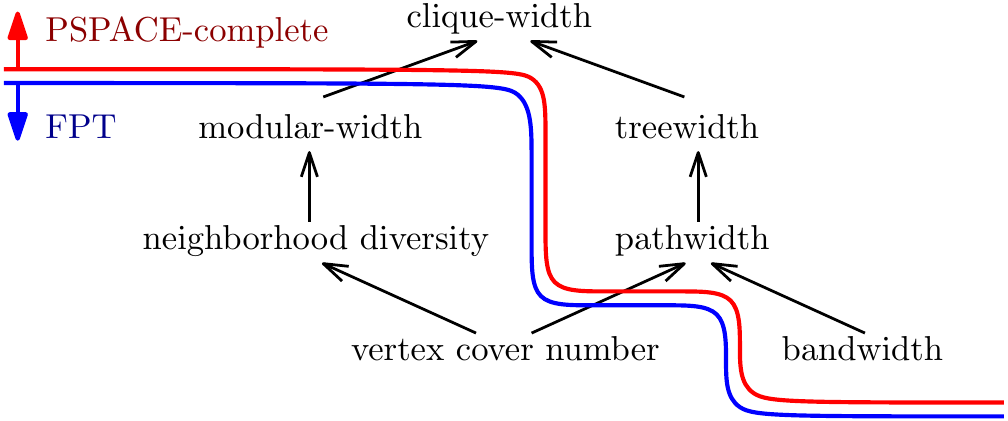}
  \caption{The complexity of \textsc{Independent Set Reconfiguration} under $\TARrule$, $\TJ$, and $\TS$
    parameterized by structural graph parameters. ``$X \rightarrow Y$'' implies that
  there is a function $f$ such that $X(G) \ge f(Y(G))$ for every graph $G$.}
  \label{fig:width-parameters}
\end{figure}


\section{FPT Algorithm for Modular-Width under $\TARrule$}
\label{sec:TAR}

In this section we present an FPT algorithm for the $\TAR{k}$-reachability
problem parameterized by modular-width.  The main technical ingredient of our
algorithm is a sub-routine which solves a related problem: given a graph $G$,
an independent set $S$, and an integer $k$, what is the largest size of an independent set
reachable from $S$ under $\TAR{k}$?  This sub-routine relies on dynamic
programming: we present (in Lemma \ref{lem:lambda}) an algorithm which answers
this ``maximum extensibility'' question, if we are given tables with answers
for the same question for all the modules in a non-trivial partition of the
input graph.  This results in an algorithm (Theorem \ref{thm:lambda}) that
solves this problem on graphs of small modular-width, which we then put to use
in Section \ref{sec:mw2} to solve the reconfiguration problem.

\subsection{Computing a Largest Reachable Set}

In this section we present an FPT algorithm (parameterized by modular-width)
which computes the following value:

\begin{definition}

Given a graph $G$, an independent set $S$, and an integer $k$, we define
$\lambda(G,S,k)$ as the largest size of the independent sets $S'$ such that
$S\sevstep_k S'$.

\end{definition}

In particular, we will present a \emph{constructive} algorithm which, given
$G,S,k$ will return an independent set $S'$ such that $|S'|=\lambda(G,S,k)$, as
well as a reconfiguration sequence proving that $S\sevstep_k S'$.

We begin by tackling an easier case: the case when the parameter is the
neighborhood diversity.

\begin{lemma}
\label{lem:ndlambda} There is an algorithm which, given a graph
$G$, an independent set $S$, and an integer $k$, returns an independent set
$S'$, with $|S'|=\lambda(G,S,k)$, and a reconfiguration sequence proving that
$S\sevstep_k S'$, in time $O^*(2^{\nd(G)})$.  \end{lemma}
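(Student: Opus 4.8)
The plan is to exploit the neighborhood diversity partition $V = V_1 \cup \dots \cup V_d$ with $d = \nd(G)$, where each $V_i$ is a clique or an independent set, and where any two modules are either completely adjacent or completely nonadjacent. The crucial observation is that under $\TAR{k}$, the only thing that matters about the current independent set $S_i := S \cap V_i$ is (a) how many tokens it contains, and (b) if $V_i$ is a clique, this count is $0$ or $1$, so really the state of each module is just "how many tokens", an integer in $\{0,1\}$ for clique-modules and in $\{0,\dots,|V_i|\}$ for independent-set-modules — but more importantly, since we only ever care about sizes, two token-configurations on the same collection of modules that occupy the same modules (those with $\ge 1$ token) behave identically. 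So the global "type" of an independent set is determined by which modules it touches, i.e. a subset $T \subseteq [d]$, together with, for each $i \in T$ with $V_i$ independent, the number of tokens placed in $V_i$.

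**The reachability analysis.** I would first argue that within a fixed set of occupied independent-set-modules, we can freely add tokens up to the module's capacity and freely remove them down to $1$, as long as the total never drops below $k$; and that we can "move" a token out of one module and into another whenever doing so keeps the running total $\ge k$. The key enabling fact: if the current set $S$ has $|S| > k$, then we have a "free token" we can remove from anywhere and re-add anywhere compatible, so reconfiguration between same-size configurations is very flexible. The real constraint is when $|S| = k$ exactly: then we cannot drop below $k$, so we must be able to add before we remove, which requires a free slot in some currently-addable module. Thus I expect the analysis to split on whether $|S| \ge k+1$ (flexible regime) or $|S| = k$ (tight regime). In the flexible regime, I claim $\lambda(G,S,k)$ equals simply $\alpha(G)$ if $|S| \ge k+1$ — wait, not quite: we cannot necessarily reach a maximum independent set, because to change which modules are occupied we may need to route through the tight regime. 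More carefully: if $|S| \ge k+1$, we can first grow $S$ greedily (adding tokens in the module of $S$'s own type never hurts, and we can then reach the largest independent set consistent with reachable module-types), and reachability of module-types is itself a reconfiguration problem on a small state space.

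**The algorithm via small state space.** Since each clique-module contributes a bit and each independent-module contributes a count, but for reachability purposes we can normalize to "full or empty or singleton", the number of relevant global states is at most $3^d$ or so — but we want $O^*(2^d)$. I would refine: the reachable configurations from $S$ of a given size are characterized by a subset of $[d]$ (the occupied modules) plus the residual budget, and I would set up a BFS/DP over the $2^d$ subsets of modules, tracking for each subset $T$ the maximum and minimum total token count achievable by an independent set occupying exactly the modules in $T$ and reachable from $S$; one step of $\TAR{k}$ either adds/removes a token within the same $T$, or changes $T$ by one module (only possible when that module becomes empty or newly occupied, i.e. passing through count $1$). The transition relation on these $O^*(2^d)$ states is polynomially checkable, so a product-graph reachability computation runs in $O^*(2^d)$ time, and from the reachable states we read off $\lambda$ as the max total count over all reachable $(T, \text{count})$. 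The reconfiguration sequence witnessing $S \sevstep_k S'$ is recovered by backtracking the BFS, expanding each module-level step into the appropriate single-token additions/removals.

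**Main obstacle.** The delicate point is handling the tight regime $|S|=k$ correctly: I must verify that a module-changing step (removing the last token of $V_i$, or placing a first token in $V_j$) is simulable by single-vertex $\TAR{k}$ moves without the count ever dipping below $k$, which forces the "add first, then remove" ordering and hence requires that at the moment of transition there is an available slot — an available vertex in some module not dominated by the current set. This is exactly where I'd need a clean invariant: I expect to prove that the set of reachable $(T,\text{count})$ states is "connected" in the right sense precisely because, whenever $|S|=k$ and we want to vacate module $V_i$, either some other occupied independent-module has spare capacity (so add there first) or some unoccupied non-dominated module exists (add there first, which also changes $T$ but favorably). Getting this case analysis airtight, and making sure the constructed witness sequence has polynomial length (it does: $O(d \cdot n)$ states suffice since the state space projects to $O^*(2^d)$), is the part that needs care rather than cleverness.
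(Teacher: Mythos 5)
Your high-level plan --- abstract the current independent set to ``which modules are occupied and by how much,'' then run BFS over this exponential-in-$\nd$ but polynomial-in-$n$ state space --- is the same idea as the paper's, and it would work. But you are carrying more state than you need, and the ``tight regime'' case analysis you flag as delicate is precisely what the paper's state compression makes vanish.

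The paper first replaces every clique module by a single vertex (you leave cliques and independent modules as separate cases). Then it observes that the target set $S'$ is maximal, hence \emph{fully} contains every module it touches; and that the given $S$ can be extended to such a ``full'' configuration purely by token additions (never dropping the count). So one can restrict attention to the at most $2^{\nd}$ configurations in which each module is either fully in or fully out, and build an auxiliary graph on these with an edge whenever two such configurations differ by exactly one module and both have size at least $k$. Reachability in that graph is the whole answer. This collapses your $(T,\text{count})$ state to just $T$: the count is determined ($\sum_{i\in T}\alpha(V_i)$), and the worry about whether you can vacate a module at $|S|=k$ becomes the trivial edge condition $\sum_{i\in T\setminus\{j\}}\alpha(V_i)\ge k$ --- to simulate that edge at the token level you fill all the other modules of $T$ first (always legal, since additions are unconstrained) and only then remove the tokens of $V_j$. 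So the ``add first, then remove'' invariant you were trying to establish holds automatically once you commit to full configurations.

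In short: your proposal is not wrong, but it is incomplete where it matters (you yourself call the tight-regime analysis the part that ``needs care''), and tracking counts is a detour. The missing ingredient is the observation that it suffices to consider configurations that fully occupy each touched module --- both the optimum and the (suitably extended) start have this form, and transitions between them need only single-module swaps. With that in hand, your $3^{\nd}$-ish worry disappears, the state space is exactly $2^{\nd}$, and the witnessing sequence is the straightforward expansion of each auxiliary-graph edge into $|V_i|$ single-token moves.
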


\begin{proof}
 Assume that $G=(V,E)$ is partitioned into $r$ sets
$V_1,V_2,\ldots, V_r$ such that each set induces a clique or an independent
set. In fact, we may assume without loss of generality that $V_i$ is an
independent set for each $i\in[r]$, because if $V_i$ is a clique we can delete
all but one of its vertices without affecting the answer.

Consider now the auxiliary graph $G'$ which has a vertex for each independent
set $S$ of $G$ that satisfies the following: (i) $|S|\ge k$ (ii) for all
$i\in[r]$ either $V_i\subseteq S$ or $S\cap V_i=\emptyset$. There are at most
$2^r$ vertices in $G'$. We add an edge between $S_1,S_2$ in $G'$ if
$\symdiff{S_1}{S_2}=V_i$ for some $i\in[r]$. As a result, $G'$ has at most
$r2^r$ edges.

We observe that the set $S'$ we seek is represented by a vertex of $G'$ ($S'$
must be maximal, therefore it fully contains all modules it intersects).
Furthermore, we may assume that the set $S$ we have been given is also
represented by a vertex of $G'$ (because if $0<|S\cap V_i|<|V_i|$ we may add to
$S$ the remaining vertices of $V_i$ and we still have a set that is reachable
from $S$).  We note that $S\sevstep_{k} S'$ if and only if there is a path from $S$
to $S'$ in $G'$, and it is not hard to construct a reconfiguration sequence in
$G$ given a path in $G'$. As a result, the problem reduces to determining the
vertices of $G'$ which are reachable from $S$, and then determining which among
these represents a largest independent set, both of which can be solved in
time linear in the size of $G'$.  
\qed
\end{proof}

Before presenting the main algorithm of this section, let us also make a useful
observation: once we are able to reach a configuration that contains a
sufficiently large number of vertices from a module, we can safely delete a
vertex from the module (bringing us closer to the case where Lemma~\ref{lem:ndlambda} will apply).

\begin{lemma}
\label{lem:delete} Let $G$ be a graph, $S$ be an independent set
of $G$, $k$ an integer, and $M$ a module of $G$. Suppose there exists an
independent set $A\subset M$ such that $(S\cap M)\subseteq A$ and
$|A|=\alpha(G[M])$. Then, for all $u\in M\setminus A$ we have
$\lambda(G,S,k)=\lambda(G-u,S,k)$.
\end{lemma}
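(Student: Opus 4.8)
The plan is to prove the two inequalities $\lambda(G-u,S,k)\le\lambda(G,S,k)$ and $\lambda(G,S,k)\le\lambda(G-u,S,k)$ separately. The first is immediate: since $u\in M\setminus A$ and $S\cap M\subseteq A$, we have $u\notin S$, so $S$ is an independent set of the induced subgraph $G-u$, and every reconfiguration sequence under $\TAR{k}$ in $G-u$ is also one in $G$; hence any independent set reachable from $S$ in $G-u$ is reachable from $S$ in $G$.

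The work is in the reverse inequality, which I would obtain by turning a witnessing sequence in $G$ into one in $G-u$. The structural fact I would lean on is that, since $M$ is a module, each vertex of $V\setminus M$ is adjacent to all of $M$ or to none of $M$; thus for an independent set $X$ of $G$, either some vertex of $X\setminus M$ is adjacent to all of $M$ --- in which case $X\cap M=\emptyset$ --- or no vertex of $X\setminus M$ has a neighbour in $M$, in which case I say $M$ is \emph{available for} $X$. When $M$ is available for $X$, the set $(X\setminus M)\cup A$ is an independent set of $G$ avoiding $u$ (as $u\notin A$), with $|(X\setminus M)\cup A|=|X\setminus M|+|A|\ge|X\setminus M|+|X\cap M|=|X|$, using $|A|=\alpha(G[M])\ge|X\cap M|$.

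Now take a sequence $S=S_0\onestep_k S_1\onestep_k\cdots\onestep_k S_\ell=S'$ in $G$ with $|S'|=\lambda(G,S,k)$, so $|S_i|\ge k$ for all $i$. I would define independent sets $\hat S_0,\ldots,\hat S_\ell$ of $G-u$ by taking $\hat S_i:=(S_i\setminus M)\cup A$ when $M$ is available for $S_i$, and $\hat S_i:=S_i$ (which avoids $u$, since then $S_i\cap M=\emptyset$) otherwise, and prove by induction on $i$ that $\hat S_i\setminus M=S_i\setminus M$, that $|\hat S_i|\ge|S_i|$, and that $S\sevstep_k\hat S_i$ in $G-u$. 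The base case reduces to reaching $\hat S_0$ from $S$ in $G-u$: if $M$ is not available for $S$ then $\hat S_0=S$ and nothing is needed, and otherwise $\hat S_0=(S\setminus M)\cup A$ is reached by adding the vertices of $A\setminus(S\cap M)$ one at a time --- legal because $S\cap M\subseteq A$, $A$ is independent, and no vertex of $S\setminus M$ has a neighbour in $M$, so all intermediate sets are independent in $G-u$, avoid $u$, and have size at least $|S|\ge k$. For the inductive step I would case on whether $M$ is available for $S_{i-1}$ and for $S_i$. If these statuses agree, then $\hat S_{i-1}$ and $\hat S_i$ differ by at most the vertex $x$ in which $S_{i-1}$ and $S_i$ differ --- and not even that when $x\in M$, since the part of $\hat S$ inside $M$ is then pinned to $A$ or to $\emptyset$ --- so at most one $\TAR{k}$ step is needed, and $|\hat S_i|\ge|S_i|$ is preserved (when $x\in M$ is \emph{added} one also uses $|A|\ge|S_{i-1}\cap M|+1$, which holds because $(S_{i-1}\cap M)\cup\{x\}$ is independent in $G[M]$). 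If the statuses disagree, then $S_{i-1}$ and $S_i$ differ by adding or removing a vertex $x\notin M$ adjacent to all of $M$, so $S_{i-1}\cap M=S_i\cap M=\emptyset$; the passage from $\hat S_{i-1}$ to $\hat S_i$ is then realized by only removing vertices (the copy of $A$ inside $\hat S_{i-1}$, or $x$) down to a set equal to $S_{i-1}$ or to $S_i$, and then only adding vertices ($x$, or a copy of $A$) --- and one checks that every set along the way is independent in $G-u$, avoids $u$, and has size at least $\min\{|S_{i-1}|,|S_i|\}\ge k$.

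Chaining the subsequences $\hat S_{i-1}\sevstep_k\hat S_i$ (using transitivity of $\sevstep_k$ on independent sets of size at least $k$, Observation~\ref{obs:basic}) gives $S\sevstep_k\hat S_\ell$ in $G-u$ with $|\hat S_\ell|\ge|S_\ell|=\lambda(G,S,k)$, hence $\lambda(G-u,S,k)\ge\lambda(G,S,k)$, which together with the first inequality proves the lemma. The step I expect to be the main obstacle is the ``disagree'' case: one must check that the copy of $A$ carried by the current set can always be dismantled, and later rebuilt on the other side, one vertex at a time without the independent set dropping below $k$ --- and this is exactly where the hypotheses $|A|=\alpha(G[M])$, $u\notin A$, and $S\cap M\subseteq A$, together with $|S_i|\ge k$ for all $i$, are used.
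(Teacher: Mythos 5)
Your proof is correct, and it establishes the two inequalities the same way as the paper does, but the construction for the hard direction ($\lambda(G-u,S,k)\ge\lambda(G,S,k)$) is genuinely different. The paper builds a \emph{parallel} sequence $S_0',\ldots,S_\ell'$ satisfying $|S_i'|=|S_i|$, $S_i'\setminus M=S_i\setminus M$, and $(S_i'\cap M)\subseteq A$: each original $\TAR{k}$ move is mimicked by a single move, simply substituting a vertex of $A$ whenever the original move touches $M$. You instead build a \emph{saturated} sequence $\hat S_i$ by replacing $S_i\cap M$ wholesale with $A$ whenever $M$ is ``available'', giving $|\hat S_i|\ge|S_i|$ and forcing multi-step transitions; this in turn obliges you to case on whether availability flips between consecutive steps, and to argue separately that the dismantle-then-rebuild of $A$ across such a flip never drops below $k$ tokens. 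Both arguments are sound. The paper's one-step-per-step mimicry avoids the availability case analysis entirely (the invariant $|S_i'\cap M|=|S_i\cap M|$ automatically handles the transition between ``$M$ occupied'' and ``$M$ empty'') and is correspondingly shorter, whereas your version has the minor virtue that each $\hat S_i$ is a canonical function of $S_i$ alone rather than being defined recursively, and it dominates $|S_i|$ pointwise, which could be reused elsewhere; for this lemma, though, that extra strength is not needed.
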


\begin{proof}
 We assume that there exists a $u\in M\setminus A$ (otherwise the
claim is vacuously true). We have $u\not\in S$, since $(S\cap M)\subseteq A$
and $u\not\in A$. Therefore, $\lambda(G,S,k)\ge \lambda(G-u,S,k)$, since any
transformation which can be performed in $G-u$ can also be performed in $G$. We
therefore need to argue that $\lambda(G-u,S,k)\ge \lambda(G,S,k)$.

Let $T$ be an independent set of $G$ such that $S\stackrel{G}{\sevstep_k} T$
and $|T|=\lambda(G,S,k)$. Consider a shortest $\TAR{k}$ reconfiguration from
$S$ to $T$, say $S_0=S, S_1, \ldots, S_{\ell}=T$. We construct a $\TAR{k}$
reconfiguration $S_0'=S, S_1', \ldots, S_{\ell}'$ with the property that for
all $i\in[\ell]$ we have $|S_i|=|S_i'|$, $S_i\setminus M=S_i'\setminus M$, and
$(S_i'\cap M)\subseteq A$.  If we achieve this we are done, since we have that
$S\stackrel{G-u}{\sevstep_k} S_{\ell}'$ and $|S_{\ell}'|=|T|=\lambda(G,S,k)$.

We will construct the new reconfiguration sequence inductively. First, $S_0'=S$
satisfies the desired properties. So, suppose that for some $i\in[\ell]$ we
have $|S_{i-1}'|=|S_{i-1}|$, $S_{i-1}\setminus M= S_{i-1}'\setminus M$, and
$(S_{i-1}'\cap M) \subseteq A$.  We now consider the four possible cases
corresponding to single reconfiguration moves from $S_{i-1}$ to $S_i$. If $S_i
= S_{i-1}\setminus\{v\}$, for some $v\in S_{i-1}\setminus M$, we set
$S_i'=S_{i-1}'\setminus\{v\}$; this is a valid move since $v\in S_{i-1}'$.  If
$S_i = S_{i-1}\setminus\{v\}$, for some $v\in S_{i-1}\cap M$, then it must be
the case that $S_{i-1}'\cap M\neq \emptyset$.  Select an arbitrary $v'\in
S_{i-1}'\cap M$ and set $S_i'=S_{i-1}'\setminus \{v'\}$. If $S_i = S_{i-1}\cup
\{v\}$ for $v\in V\setminus M$, we set $S_i' = S_{i-1}'\cup \{v\}$; it is not
hard to see that $S_i'$ is still an independent set and satisfies the desired
properties. Finally, if $S_i = S_{i-1}\cup \{v\}$ for $v\in M$, we observe that
$|S_{i-1}\cap M|<\alpha(G[M])=|A|$ (otherwise $S_i$ would not be independent).
Since $|S_{i-1}\cap M|=|S_{i-1}'\cap M|$ there exists $v'\in A\setminus
S_{i-1}'$. We therefore set $S_i'=S_{i-1}' \cup \{v'\}$. 
\qed
\end{proof}

We are now ready to present our main dynamic programming procedure.

\begin{lemma}
\label{lem:lambda} Suppose we are given the following input:

\begin{enumerate}

\item A graph $G=(V,E)$, an integer $k$, and an independent set $S$ with
$|S|\ge k$.

\item A partition of $V$ into $r\le \mw(G)$ non-empty modules,
$V_1,\ldots,V_r$.

\item For each $i\in[r]$, for each $j\in[ |S\cap V_i| ]$ an independent set
$R_{i,j}$, such that $|R_{i,j}|=\lambda(G[V_i],S\cap V_i,j)$, and a
transformation sequence proving that $(S\cap V_i)\stackrel{G[V_i]}{\sevstep_j}
R_{i,j}$.

\end{enumerate}
Then, there exists an algorithm which returns an independent set $R$ of $G$,
such that $|R|=\lambda(G,S,k)$, and a transformation sequence proving that
$S\sevstep_k R$, running in time $O^*(2^{\mw(G)})$.

\end{lemma}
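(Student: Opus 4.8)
The plan is to combine the ``local'' tables $R_{i,j}$ for the modules with the neighborhood-diversity algorithm of Lemma~\ref{lem:ndlambda}, applied to a suitably contracted quotient graph. The key observation is this: given the starting independent set $S$, we may first work inside each module $V_i$ independently. Since $V_i$ is a module, during any reconfiguration the vertices of $V\setminus V_i$ behave uniformly towards $V_i$, so a reconfiguration inside $G[V_i]$ starting from $S\cap V_i$ and keeping the current independent set of size at least $|S\cap V_i|$ can always be lifted to a reconfiguration of $G$ of the same length (the tokens outside $V_i$ never move and the global size constraint $\ge k$ is satisfied because $|S|\ge k$ and we only grow the part inside $V_i$). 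Thus, for each $i$, using the supplied transformation sequence, we can reach in $G$ an independent set whose intersection with $V_i$ is $R_{i,j}$ where $j=|S\cap V_i|$, i.e.\ a largest set reachable inside $G[V_i]$ while never dropping below the initial local size.

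The second ingredient handles the interaction between modules. Once, inside some module $V_i$, the current independent set contains $\alpha(G[V_i])$ vertices of $V_i$ (which happens precisely when $|R_{i,|S\cap V_i|}| = \alpha(G[V_i])$, detectable from the table), Lemma~\ref{lem:delete} lets us delete all but one vertex of $V_i$ without changing $\lambda$; the resulting module is a single vertex. For modules where the locally reachable maximum $|R_{i,j}|$ with $j=|S\cap V_i|$ is strictly less than $\alpha(G[V_i])$, the behaviour is more delicate: here the extra tokens needed to grow inside $V_i$ must come from ``outside capacity,'' i.e.\ one must temporarily remove tokens elsewhere. The right way to encode this is to reduce, module by module, to a bounded-size gadget: replace each module $V_i$ by a set of at most $\mw(G)+1$ (or a constant number of) vertices — essentially the ``critical'' sizes $|S\cap V_i|, |S\cap V_i|+1, \dots$ up to the largest value $t_i$ such that one can reach $t_i$ tokens in $V_i$ given some extra budget $b$ of tokens removed from the rest of the graph. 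I would precompute, from the table $R_{i,j}$ together with the sequences, for each budget $b \ge 0$ the largest number $f_i(b)$ of tokens obtainable in $V_i$; by the same module/lifting argument this $f_i$ is monotone and has a bounded number of relevant breakpoints, so it is describable by a constant-size gadget attached to the quotient graph.

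With these gadgets in place, I would build an auxiliary graph $H$ of neighborhood diversity $O(\mw(G))$: for each module $V_i$ put the (constant-size) gadget realizing $f_i$, connected to the gadgets of adjacent modules exactly as the quotient graph dictates (so each gadget is itself a module of $H$). A configuration of $H$ of size $\ge k$ then corresponds, via $f_i$, to a globally reachable configuration of $G$, and vice versa; running Lemma~\ref{lem:ndlambda} on $H$ with threshold $k$ yields $\lambda(G,S,k)$ and a reconfiguration sequence in $H$, which we translate back to $G$ by splicing in the local sequences from the table and the sequences promised by Lemma~\ref{lem:delete}. The running time is dominated by the call to Lemma~\ref{lem:ndlambda}, giving $O^*(2^{\mw(G)})$.

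\textbf{Main obstacle.} The hard part will be the bookkeeping for modules where the reachable local maximum is below $\alpha(G[V_i])$: one has to argue carefully that the only way to exploit the extra room in such a module is by a ``borrow tokens, grow inside, return tokens'' pattern, that the amount borrowed needed at any moment can be taken to be small (bounded in terms of the structure, not $k$), and that the function $f_i(b)$ is correctly computable from the given table $R_{i,j}$ — which only records the $\TAR{j}$-reachable maxima for thresholds up to $|S\cap V_i|$, not for smaller thresholds. Reconciling this with Observation~\ref{obs:basic} (lowering the threshold only enlarges reachability) and with the fact that temporarily a module's token count may dip below $|S\cap V_i|$ is where the proof needs the most care; I expect this to require re-deriving the needed lower-threshold reachability information for each module from its top-level table via a short monotonicity/exchange argument, or alternatively strengthening the inductive hypothesis so that the tables passed up contain $R_{i,j}$ for all $j \le \alpha(G[V_i])$ rather than only $j \le |S\cap V_i|$.
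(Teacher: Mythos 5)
Your proposal takes a genuinely different route from the paper — a single-shot reduction to a ``gadget'' quotient graph fed to Lemma~\ref{lem:ndlambda} — whereas the paper iterates three simplification rules on a working graph $H$ (shrink a module to an independent set once it reaches $\alpha$, improve inside the already-flattened part $M_0$, improve inside a remaining module with the currently available budget) and proves optimality by a careful argument about when no rule applies. The difficulty you flag is real, but the gadget route does not fill it, and it has additional problems you do not address.

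Concretely, the reduction as sketched has two gaps that I do not see how to close. First, the claim that $f_i(b)$ (the largest token count achievable in $V_i$ under budget $b$) has ``a bounded number of relevant breakpoints'' is not true in terms of $\mw(G)$: the function is monotone from $|S\cap V_i|$ up to $\alpha(G[V_i])$, and it can climb by one unit at each of up to $\alpha(G[V_i])-|S\cap V_i|$ distinct thresholds, a quantity governed by $|V_i|$ and unrelated to modular-width. There is no ``constant-size'' or even $O(\mw)$-size encoding in general. Second, and more fundamentally, for Lemma~\ref{lem:ndlambda} to apply, each gadget must itself be an independent set (or clique) in the auxiliary graph, and reconfiguration inside an independent-set module is unconstrained: any subset is reachable from any other by one-step additions and removals. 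That is nothing like reconfiguration inside $G[V_i]$, where the internal edge structure may forbid intermediate configurations. The gadget would therefore overapproximate reachability and can report a larger $\lambda$ than the truth, so the ``translate back to $G$ by splicing in local sequences'' step has no justification. (Incidentally, you have the threshold issue backwards: $[\,|S\cap V_i|\,]=\{1,\dots,|S\cap V_i|\}$ does include the smaller thresholds; the genuinely missing case is $j\le 0$, where the relevant quantity is $\alpha(G[V_i])$, which is easy to precompute separately.)

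What the paper does instead is avoid committing to any gadget. It maintains a working graph and working set, with a threshold $t_i$ per module, and repeatedly does the following: whenever a module's token count has reached $\alpha$, it flattens it into an independent set via Lemma~\ref{lem:delete} and absorbs it into a low-neighborhood-diversity part $M_0$; whenever the budget $k-|R\setminus M_i|$ permits, it improves $R\cap M_i$ using the supplied table (Rule 2b) or improves $R\cap M_0$ via Lemma~\ref{lem:ndlambda} applied only to the ``free'' part of $M_0$ (Rule 2a). The correctness argument — that once no rule applies, no strictly larger independent set can be reached — is the technical heart of the lemma (the notion of ``interesting'' set and the minimum-length-counterexample argument), and it has no counterpart in your outline. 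This is the piece you would still need even if you repaired the gadget: you must show that the iterative improvements are sufficient, not just that each individual improvement is sound.
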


\begin{proof}
We describe an iterative procedure which maintains the following variables:

\begin{itemize}

\item A working graph $H$, and a working independent set $R$ of $H$.

\item A partition of $V(H)$ into $r+1$ sets $M_0,M_1,\ldots,M_r$, some of which
may be empty.

\item A tuple of $r+1$ non-negative integer ``thresholds'', $t_i$, for
$i\in\{0,\ldots,r\}$.

\end{itemize}

Informally, the meaning of these variables is the following: $H$ is a working
copy of $G$ where we may have deleted some vertices which we have found to be
irrelevant (using Lemma \ref{lem:delete}); $R$ represents a working independent
set which is reachable from the initial set $S$ in $G$ (and we have a
transformation to prove this reachability); $M_0$ represents the union of
initial modules which we have ``processed'' using Lemma \ref{lem:delete}, and
therefore turned into independent sets, which implies that $H[M_0]$ is a graph
with small neighborhood diversity; and $t_i$ represents a threshold above which
we are allowed to perform internal transformations inside the set $M_i$ without
violating the size constraints (that is, while keeping $|R|\ge k$).

To make this more precise we will maintain the following invariants.

\begin{enumerate}

\item $H$ is an induced subgraph of $G$ and $M_0,M_1,\ldots,M_r$ is a partition
of $V(H)$.

\item We have a transformation proving that $S\stackrel{G}{\sevstep_k} R$.

\item $\lambda(G,S,k)= \lambda(H,R,k)$.

\item $\nd(H[M_0])\le |\{i\in[r]\ |\ M_i=\emptyset\}| \le \mw(G)$. 

\item For all $i\in[r]$ we have either $M_i=V_i$ or $M_i=\emptyset$. If
$M_i\neq\emptyset$ then $M_i\cap R\neq \emptyset$.

\item For all $i\in\{0,\ldots,r\}$ such that $M_i\neq \emptyset$ we have
$k-|R\setminus M_i|\le t_i\le |R\cap M_i|$.

\item For each $i\in[r]$ such that $M_i\neq\emptyset$ we have a transformation
$(S\cap V_i)\stackrel{G[V_i]}{\sevstep_{t_i}} (R\cap M_i)$.

\item For  each $i\in[r]$ such that $M_i\neq\emptyset$ we have
$\lambda(G[V_i],S\cap V_i,t_i) = |R\cap M_i|$.

\end{enumerate}

Informally, invariants 1--3 state that we may have deleted some vertices of $G$
and reconfigured the independent set, but this has not changed the answer.
Invariants 4--5 state that $H$ can be thought of as having two parts: the low
neighborhood diversity part induced by $M_0$ and the rest which includes some
unchanged modules of $G$.  Finally for each such non-empty module $M_i$
invariant 6 states that it is safe to perform $\TAR{t_i}$ moves inside $M_i$,
and invariants 7--8 state that the current configuration is reachable and
best possible under such moves, inside the module.

In the remainder, when we say that we ``perform'' a $\TAR{k}$ transformation
from a set $R$ to a set $R'$, what we mean is that our algorithm appends this
transformation to the transformation which we already have from $S$ to $R$ (by
invariant 2), to obtain a transformation from $S$ to the new set $R'$.

\noindent\textbf{Preprocessing}: Our algorithm begins by performing some
preprocessing steps which ensure that all invariants are satisfied. First, for
each $i\in[r]$ we compute a maximum independent set $A_i$ of $G[V_i]$ (this
takes time $O^*(2^{\mw(G)})$).  We initialize $H:=G$ and  $R:=S$.  For each
$i\in[r]$ such that $V_i\cap S=\emptyset$ we set $M_i:=\emptyset$, $t_i:=0$ we
add all vertices of $A_i$ to $M_0$, and we delete from $H$ all vertices of
$M_i\setminus A_i$. After this step we have satisfied invariants 1, 2
(trivially, since $R=S$), 4 (because $M_0$ is composed of the at most $r$
modules which had an empty intersection with $S$, each of which now induces an
independent set), and 5. To see that invariant 3 is satisfied we invoke Lemma
\ref{lem:delete} repeatedly: we see that the lemma trivially applies if $S\cap
V_i=\emptyset$ and allows us to delete all vertices of a module which are
outside a maximum independent set.

For the second preprocessing step we do the following for each $i\in[r]$ for
which $S\cap V_i\neq \emptyset$: we set $M_i:=V_i$ and $t_i=|S\cap V_i|$.  We
recall that we have been given in the input a set $R_{i,j}$ for $j=|S\cap V_i|$
such that $\lambda(G[V_i], S\cap V_i, j)=|R_{i,j}|$, and a transformation
$(S\cap V_i)\stackrel{G[V_i]}{\sevstep_j} R_{i,j}$; we perform this
transformation in $H$, leaving $R\setminus M_i$ unchanged, to obtain an
independent set $R$ such that $R\cap M_i = R_{i,j}$. This is a valid $\TAR{k}$
transformation in $H$ since inside $V_i$ the transformation maintains an
independent set with at least $t_i$ tokens at all times. We observe that after
this step is applied for all $i\in[r]$, all invariants are satisfied.  This
preprocessing step is performed in polynomial time, because the sets $R_{i,j}$
and the transformations leading to them have been given to us as input. Thus,
all our preprocessing steps take a total time of $O^*(2^{\mw(G)})$.

\medskip

For the main part of its execution our algorithm will enter a loop which
attempts to apply some rules (given below) which either delete some vertex of
$H$ or produce a new (larger) independent set $R$, while maintaining all
invariants. Once this is no longer possible the algorithm returns the current
set $R$ as the solution.

\smallskip

\noindent\textbf{Rule 1 (Irrelevant Vertices)}: Check if there exists $i\in[r]$
such that $H[M_i]$ induces at least one edge and $|R\cap M_i| =
\alpha(H[M_i])$. Then, delete all vertices of $M_i\setminus R$ from $H$, set
$M_0:=M_0\cup (M_i\cap R)$, $t_i:=0$, and $M_i:=\emptyset$.

\smallskip

\noindent\textbf{Rule 2a (Configuration Improvement in $M_0$)}: Let
$F_0\subseteq M_0$ be the set of vertices of $M_0$ that have no neighbors in
$\bigcup_{i\in[r]} M_i$.  Let $\lambda_0:= \lambda(H[F_0],R\cap F_0, k-|R\setminus
F_0|)$. If $\lambda_0> |R\cap F_0|$ then set $t_0:=\max\{k-|R\setminus
F_0|,0\}$ and perform a transformation that leaves $R\setminus F_0$ unchanged
and results in $|R\cap F_0|=\lambda_0$.

\noindent\textbf{Rule 2b (Configuration Improvement in $V\setminus M_0$)}: For
each $i\in[r]$ such that $M_i\neq \emptyset$ let
$\lambda_i:=\lambda(H[M_i],R\cap M_i, k-|R\setminus M_i|)$. If there exists $i$
such that $\lambda_i>|R\cap M_i|$, then set $t_i:=\max\{k-|R\setminus M_i|,0\}$
and perform a transformation that leaves $R\setminus M_i$ unchanged and results
in $|R\cap M_i|=\lambda_i$.

\begin{myclaim}\label{claim:1} Rule 1 maintains all invariants and can be applied
in time $O^*(2^{\mw(G)})$. \end{myclaim}

\begin{proof}

First, it is not hard to see that the rule can be applied in the claimed
running time, since the only non-polynomial step is computing $\alpha(H[M_i])$,
which can be done in $O^*(2^{\mw(G)})$.

Let us argue why the rule maintains all invariants. Invariants 1, 2, 5, 6, 7, 8
remain trivially satisfied if they were true before applying the rule. For
invariant 4, we observe that $H[M_0]$ is composed of the (at most $r$) modules
which have been turned into independent sets, therefore its neighborhood
diversity is at most $\mw(G)$. For invariant 3, we invoke Lemma
\ref{lem:delete}. The lemma applies because $R\cap M_i$ is itself a maximum
independent set of $H[M_i]$, and the lemma states that we can safely delete
vertices of $M_i$ outside this independent set.  
\localqed
\end{proof}

\begin{myclaim}\label{claim:2a} Rule 2a maintains all invariants and can be
applied in time $O^*(2^{\mw(G)})$. \end{myclaim}

\begin{proof}

Let us first make the useful observation that $R\cap M_0= R\cap F_0$, which is
a consequence of invariant 5: if $R$ contained a vertex $u\in M_0\setminus F_0$
then this vertex would have a neighbor in some non-empty $M_i$, therefore $u$
would be connected to all of $M_i$ (since $M_i$ is a module), and $R$ would not
be independent since $R\cap M_i\neq \emptyset$.

By invariant 4  we have that $\nd(H[F_0])\le \mw(G)$.  By Lemma
\ref{lem:ndlambda} we can therefore compute $\lambda_0$ in time
$O^*(2^{\mw(G)})$. Furthermore, the algorithm of the lemma returns to us a
transformation $(R\cap F_0)\stackrel{H[F_0]}{\sevstep_{t_0}} R'$, with
$|R'|=\lambda_0$. We perform the same transformation in $H$, keeping
$R\setminus M_0=R\setminus F_0$ unchanged and resulting in $R\cap F_0=R'$. This
is a $\TAR{k}$ transformation in $H$ because by invariant 6, $t_0+|R\setminus
M_0|\ge k$. We therefore maintain invariant 2.  Invariants 1,3,4,5,7,8 are
unaffected by this rule (so remain satisfied).  For invariant 6 observe that we
have set $t_0 := \max\{k-|R\setminus F_0|,0\}\ge k-|R\setminus M_0|$ and that
$R\cap M_0$ increases and $t_0$ does not increase when we apply the rule.
\localqed
\end{proof}

\begin{myclaim}\label{claim:2b} Rule 2b maintains all invariants and can be
applied in polynomial time.  \end{myclaim}

\begin{proof}

First, observe that invariants 1,3,4,5 are unaffected. To ease notation, let
$j:=k-|R\setminus M_i|$. 

Let us first deal with the easy case $j\le 0$. Recall that during the
preprocessing step we have calculated a maximum independent set of $G[V_i]$,
call it $A_i$, and that by invariant 5 $H[M_i]=G[V_i]$. As a result,
$\lambda_i=\lambda(G[V_i],R\cap M_i,0)=|A_i|$ is a known value.  We have set
$t_i=0$, so invariants 6,7 are trivially satisfied no matter how we modify
$R\cap M_i$.  We perform a (trivial) transformation that leaves $R\setminus
M_i$ unchanged and first removes all tokens from $R\cap M_i$ and then adds all
tokens of $A_i$. This is a $\TAR{k}$ transformation in $G$, as $|R\setminus
M_i|\ge k$, so invariant 2 is satisfied, and invariant 8 is satisfied because
$A_i$ is a maximum independent set of $G[V_i]$. We therefore assume in the
remainder that $j>0$.

Let us first argue why $\lambda_i$ can be computed in polynomial time. Again
$H[M_i]=G[V_i]$ by invariant 5, so we want to compute
$\lambda_i=\lambda(G[V_i],R\cap M_i, j)$. By invariant 7, we have a
transformation $(R\cap M_i) \stackrel{G[V_i]}{\sevstep_{t_i}} (S\cap V_i)$ and
by invariant 6 we have $t_i\ge k-|R\setminus M_i|=j$. Therefore, using Observation~\ref{obs:basic}
we have $(R\cap M_i) \stackrel{G[V_i]}{\sevstep}_{j} (S\cap
V_i)$. This means that $\lambda_i=\lambda(G[V_i], S\cap V_i, j)$. Therefore, we
can find the value $\lambda_i$ in the input we have been supplied. Furthermore,
we have in the input an independent set $R_{i,j}$ that is $\TAR{j}$ reachable
from $S\cap V_i$ in $G[V_i]$ and has $|R_{i,j}|=\lambda_i$; this set is also
$\TAR{j}$ reachable from $R\cap M_i$, so we can perform a transformation that
leaves $R\setminus M_i$ unchanged and results in $R\cap M_i = R_{i,j}$. This
maintains invariant 2, as well as invariants 7 and 8. Finally, invariant 6 is
satisfied by the new value of $t_i$ since $t_i$ may only decrease and $R\cap
M_i$ increases.  
\localqed
\end{proof}

We are now ready to argue for the algorithm's correctness. First, we observe
that by Claims \ref{claim:1}, \ref{claim:2a}, \ref{claim:2b}, all rules can be
applied in time at most $O^*(2^{\mw(G)})$. Since all rules either delete a
vertex of the graph or increase $|R|$, the algorithm runs in time
$O^*(2^{\mw(G)})$. By invariant 2, the independent set $R$ returned by the
algorithm (when no rule can be applied) is $\TAR{k}$ reachable from $S$ in $G$,
therefore $\lambda(G,S,k)\ge |R|$. We therefore need to argue that $|R|\ge
\lambda(G,S,k)$. By invariant 3, this is equivalent to $|R|\ge \lambda(H,R,k)$.
In other words, we want to argue that in the graph $H$ on which the algorithm
may no longer apply any rules, it is impossible to reach an independent set $T$
using $\TAR{k}$ moves such that $|T|>|R|$.

Let $M_0,M_1,\ldots, M_r$ be the partition of $V(H)$ at the last iteration of
our algorithm. For an independent set $T$ of $H$ we will say that $T$ is
\emph{interesting} if it satisfies at least one of the following two
conditions: (a) for some $i\in\{0,\ldots,r\}$ we have $|T\cap M_i|> |R\cap
M_i|$ or (b) for some $i\in[r]$ we have $R\cap M_i\neq \emptyset$ and $T\cap
M_i=\emptyset$. In other words, an independent set is interesting if it manages
to place more tokens than $R$ in a module $M_i$ (or in $M_0$), or if it manages
to remove all tokens from a module $M_i$ that is non-empty in $R$.

We will make two claims: (i) no interesting set is reachable from $R$ with
$\TAR{k}$ moves in $H$; (ii) if for an independent set $T$ of $H$ we have
$|T|>|R|$, then $T$ is interesting.  Together the two claims imply that $|R|\ge
\lambda(H,R,k)$, since all sets which are strictly larger than $R$ are not
reachable.

For claim (ii) suppose that $T$ is not interesting, therefore
$|T\cap M_i|\le |R\cap M_i|$ for each $i\in \{0,\ldots,r\}$.
Since $M_0,M_1,\ldots, M_r$ is a partition of $V(H)$ we have $|T|\le |R|$.

For claim (i) suppose that $T$ is interesting and $R\sevstep_k T$. Among all
such sets $T$ consider one whose shortest reconfiguration sequence from $R$
has minimum length. Let $R_0=R, R_1,\ldots, R_{\ell}=T$ be such a shortest
reconfiguration sequence. Therefore, we have $\ell\ge 1$ and $R_0,\ldots,
R_{\ell-1}$ are not interesting. We now consider the two possible reasons for
which $R_{\ell}$ may be interesting.

In case there exists $i\in\{0,\ldots,r\}$ such that $|R_{\ell}\cap M_i|> |R\cap
M_i|$ we construct a transformation from $R\cap M_i$ to $R_{\ell}\cap M_i$ in
$H[M_i]$ by considering the sets $R_0\cap M_i, R_1\cap M_i, \ldots,
R_{\ell}\cap M_i$.  We note that this is a $\TAR{k-|R\setminus M_i|}$
transformation, since for all $j<\ell$ we have $|R_j\setminus M_i|\le
|R\setminus M_i|$ and $|R_j|\ge k$. But then Rule 2 could have been applied. In
particular, if $i\neq 0$ the transformation proves that
$\lambda_i=\lambda(H[M_i], R\cap M_i, k-|R\setminus M_i|) \ge |R_{\ell}\cap
M_i| > |R\cap M_i|$ so we could have applied Rule 2b.  If $i=0$ we first make
the observation that for all $j\in\{0,\ldots,\ell\}$ we have $R_j\cap M_0 =
R_j\cap F_0$ (where $F_0$ is defined in Rule 2a). To see this we argue (as in
Claim \ref{claim:2a}) that if there exists $u\in R_j \cap (M_0\setminus F_0)$
then $u$ is connected to all of a module $M_i$ which has $R\cap M_i\neq
\emptyset$, and therefore $R_j\cap M_i\neq \emptyset$ (as $R_j$ is not
interesting (does not satisfy (b)) for $j<\ell$) which contradicts the independence of $R_j$. We now
have $\lambda_0=\lambda(H[F_0], R\cap F_0, k-|R\setminus M_0|) \ge
|R_{\ell}\cap F_0| > |R\cap F_0|$ and Rule 2a could have been applied.

In the case there exists $i\in[r]$ such that $R\cap M_i\neq\emptyset$ and
$R_{\ell}\cap M_i=\emptyset$, this means that $|R_{\ell-1}\setminus M_i|\ge k$,
which implies that $|R\setminus M_i|\ge k$, because $R_{\ell-1}$ is not
interesting (does not satisfy (a)). If $|R\cap M_i|=\alpha(G[V_i])$, then Rule 1 would have been
applied, so we assume $|R\cap M_i|<\alpha(G[V_i])$. However, $k-|R\setminus
M_i|\le 0$, so we have $\lambda_i=\lambda(H[M_i],R\cap M_i, k-|R\setminus
M_i|)=\alpha(G[V_i]) > |R\setminus M_i|$, therefore Rule 2b should have been
applied.

Thus, in both cases we see that a rule could have been applied and we have a
contradiction.  Therefore, the set $R$ returned is optimal.  
\qed
\end{proof}

We thus arrive to the main theorem of this section.

\begin{theorem}\label{thm:lambda}

There exists an algorithm which, given a graph $G$, an independent set $S$, and
an integer $k$, runs in time $O^*(2^{\mw(G)})$ and outputs an independent set
$S'$ such that $|S'|=\lambda(G,S,k)$ and a $\TAR{k}$ transformation
$S\sevstep_k S'$.

\end{theorem}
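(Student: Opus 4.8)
The plan is to prove Theorem~\ref{thm:lambda} by a straightforward bottom-up recursion over the modular decomposition tree of $G$, using Lemma~\ref{lem:lambda} as the routine that handles a single internal node of that tree. First I would invoke the known polynomial-time algorithm \cite{CournierH94,HabibP10,TedderCHP08} to obtain a non-trivial partition of $V$ into $r\le\mw(G)$ modules $V_1,\ldots,V_r$, and recursively build the decomposition inside each $G[V_i]$; since deleting vertices cannot increase modular-width and each $G[V_i]$ is a strict induced subgraph (the partition is non-trivial), the recursion is well-founded and the whole decomposition has polynomial size.

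Next I would describe what we compute at each module. Lemma~\ref{lem:lambda} does not just need the final value $\lambda(G[V_i],S\cap V_i,j)$ for one $j$, but a table of such sets $R_{i,j}$ together with witnessing $\TAR{j}$-sequences, for every $j\in[\,|S\cap V_i|\,]$. So for each module $V_i$ and each relevant threshold $j$ I would recursively call the algorithm on the instance $(G[V_i],\,S\cap V_i,\,j)$; by induction each such call returns an independent set of size $\lambda(G[V_i],S\cap V_i,j)$ and a proving sequence, in time $O^*(2^{\mw(G[V_i])})\le O^*(2^{\mw(G)})$. Feeding all these tables into Lemma~\ref{lem:lambda} for the top node then yields the desired $R$ with $|R|=\lambda(G,S,k)$ and a $\TAR{k}$-transformation $S\sevstep_k R$. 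The base case is when $|V|\le\mw(G)$ (a leaf, or any node small enough): there the whole graph has at most $\mw(G)$ vertices, so all of $2^{|V|}\le 2^{\mw(G)}$ independent sets and the reachability graph among them can be enumerated explicitly, exactly as in the proof of Lemma~\ref{lem:ndlambda} but without even needing the module structure.

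For the running-time bound I would argue that the recursion tree has $O(n)$ nodes (the modular decomposition tree has linear size), at each node we make at most $\sum_i |S\cap V_i|\le n$ recursive calls with different thresholds plus one application of Lemma~\ref{lem:lambda}, and each application of the lemma (and each leaf computation) costs $O^*(2^{\mw(G)})$; since $\mw(G[V_i])\le\mw(G)$ for all $i$, every node's local work is $O^*(2^{\mw(G)})$, and multiplying by the polynomial number of nodes and thresholds keeps the total at $O^*(2^{\mw(G)})$. The concatenation of the returned transformation sequences is valid because Lemma~\ref{lem:lambda} explicitly consumes the sub-module sequences and produces a single global one, and each sequence has polynomial length (every step either deletes a vertex or increases the independent set size).

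The only real subtlety — and the step I expect to need the most care — is making sure the recursion interface matches exactly what Lemma~\ref{lem:lambda} demands: the lemma requires, for module $V_i$, the tables for thresholds $j$ ranging over $[\,|S\cap V_i|\,]$, whereas a naive reading might suggest we also need thresholds up to $k$; but Proposition/Observation~\ref{obs:basic} (monotonicity of $\sevstep$ in the threshold) together with invariant~7 in the proof of Lemma~\ref{lem:lambda} shows that the lemma itself only ever queries thresholds $t_i\le|S\cap V_i|$, so the recursive table of size $|S\cap V_i|$ suffices. Once that is pinned down, the theorem follows by a clean induction on the depth of the modular decomposition, with Lemma~\ref{lem:lambda} as the inductive step and the $|V|\le\mw(G)$ case as the base.
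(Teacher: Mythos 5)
Your approach is essentially the paper's: bottom-up dynamic programming over the modular decomposition tree, with Lemma~\ref{lem:lambda} handling each internal node, recursing on the $G[V_i]$ to supply the tables of sub-module values that the lemma requires, and the observation (via Observation~\ref{obs:basic}) that only thresholds $j\le|S\cap V_i|$ are ever queried is exactly the point the paper makes.

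However, your running-time accounting has a real gap. You make a \emph{separate} recursive call for each pair $(V_i,j)$ with $j\in[|S\cap V_i|]$, and each of those calls in turn spawns its own set of calls on the sub-modules of $V_i$. The paper avoids this by strengthening the object computed by the recursion: a single call on $(G[V_i],\,S\cap V_i)$ returns the \emph{entire} table $(R_{i,j})_{j\in[|S\cap V_i|]}$, so each node of the modular decomposition tree is recursed into exactly once, and at each node the algorithm invokes Lemma~\ref{lem:lambda} up to $n$ times to build the node's own table. Under your formulation, the branching factor at each level of the recursion is $\sum_i|S\cap V_i|\le n$, so without memoization you get on the order of $n^{\mathrm{depth}}$ calls, not $O(n)$; your sentence ``the recursion tree has $O(n)$ nodes (the modular decomposition tree has linear size)'' conflates the recursion tree with the decomposition tree, and the two are not the same in your setup. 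The fix is small — either memoize across the $O(n^2)$ distinct $(V_i,j)$ subinstances, or redefine the algorithm (as the paper does) to output the whole table for all thresholds up to $|S|$ so that one recursive call per module suffices — but as written the time bound does not follow from the argument given.
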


\begin{proof}

We perform dynamic programming using Lemma \ref{lem:lambda}. More precisely,
our goal is, given $G$ and $S$, to produce for each value of $j\in [|S|]$ an
independent set $R_{j}$ such that $S\sevstep_{j} R_j$ and $|R_j|=\lambda(G,S,j)$.
Clearly, if we can solve this more general problem in time $O^*(2^{\mw(G)})$ we
are done.

Our algorithm works as follows: first, it computes a modular decomposition of
$G$ of minimum width, which can be done in time at most $O(n^2)$
\cite{HabibP10}. If $|V(G)|\le \mw(G)$ then the problem can be solved in
$O^*(2^{\mw(G)})$ by brute force (enumerating all independent sets of $G$), or
even by Lemma \ref{lem:ndlambda}. We therefore assume that $G$ has a
non-trivial partition into $r\le\mw(G)$ modules $V_1,\ldots, V_r$. We call our
algorithm recursively for each $G[V_i]$, and obtain for each $i\in[r]$ and
$j\in [|S\cap V_i|]$ a set $R_{i,j}$ such that $|R_{i,j}|=\lambda(G[V_i],S\cap
V_i,j)$ and a transformation $(S\cap V_i)\stackrel{G[V_i]}{\sevstep_j}
R_{i,j}$. We use this input to invoke the algorithm of Lemma \ref{lem:lambda}
for each value of $j\in[|S|]$. This allows us to produce the sets $R_j$ and the
corresponding transformations.

Suppose that $\beta\ge 2$ is a constant such that the algorithm of Lemma
\ref{lem:lambda} runs in time at most $O(2^{\mw(G)}n^{\beta})$. Our algorithm
runs in time at most $O(2^{\mw(G)}n^{\beta+2})$. This can be seen by
considering the tree representing a modular decomposition of $G$. In each node
of the tree (that represents a module of $G$) our algorithm makes at most $n$
calls to the algorithm of Lemma \ref{lem:lambda}. Since the modular
decomposition has at most $O(n)$ nodes, the running time bound follows.
\qed
\end{proof}

\subsection{Reachability}\label{sec:mw2}

In this section we will apply the algorithm of Theorem \ref{thm:lambda} to
obtain an FPT algorithm for the $\TAR{k}$ reconfiguration problem parameterized
by modular-width. The main ideas we will need are that (i) using the algorithm
of Theorem~\ref{thm:lambda} we can decide if it is possible to arrive at a
configuration where a module is empty of tokens (Lemma \ref{lem:empty}) (ii) if
a module is empty in both the initial and target configurations, we can replace
it by an independent set (Lemma \ref{lem:delete2}) and (iii) the
reconfiguration problem is easy on graphs with small neighborhood diversity
(Lemma \ref{lem:ndreach}).  Putting these ideas together we obtain an algorithm
which can always identify an irrelevant vertex which we can delete if the input
graph is connected. If the graph is disconnected, we can use ideas similar to
those of \cite{Bonsma16} to reduce the problem to appropriate sub-instances in
each component.

\begin{lemma}\label{lem:empty} There is an algorithm which, given a graph $G$,
an independent set $S$, a module $M$ of $G$, and an integer $k$, runs in time
$O^*(2^{\mw(G)})$ and either returns a set $S'$ with $S'\cap M=\emptyset$ and
$S\sevstep_k S'$ or correctly concludes that no such set exists. \end{lemma}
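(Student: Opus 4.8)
The goal is to decide whether some independent set $S'$ with $S' \cap M = \emptyset$ is $\TAR{k}$-reachable from $S$, and to produce such an $S'$ (with a transformation) when one exists. The key observation is that emptying a module $M$ is only ``hard'' because of the size constraint: if at some point we have enough tokens outside $M$, we can simply remove the tokens of $M$ one by one. So the real question is whether we can first reconfigure $S$ into a set $R$ with $|R \setminus M| \ge k$ while using only $\TAR{k}$ moves (and ideally without ever being forced to put new tokens into $M$, or at least being able to recover). The natural tool is Theorem~\ref{thm:lambda}, but applied to the graph $G - M$ rather than to $G$.

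\textbf{Key steps.} First I would handle the trivial case $S \cap M = \emptyset$ (output $S'=S$). Otherwise, consider the graph $G' := G - M$, which has $\mw(G') \le \mw(G)$ since deleting vertices does not increase modular-width. The tokens of $S$ lying outside $M$ form an independent set $S \setminus M$ of $G'$; note that in $G$ these tokens may be ``blocked'' from moving by the tokens inside $M$, but in $G'$ they are free. The plan is: compute, via Theorem~\ref{thm:lambda} applied to $G'$, the value $\lambda(G', S \setminus M, k - |S \cap M|)$ together with a witnessing set and transformation. Intuitively $k - |S \cap M|$ is the number of tokens we are allowed to leave outside $M$ while the $|S\cap M|$ tokens still sit inside $M$. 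If this $\lambda$ value is at least $k$, then we have a $\TAR{(k - |S\cap M|)}$-transformation inside $G'$ from $S \setminus M$ to some set $B$ with $|B| \ge k$; lifting it to $G$ by keeping the $|S \cap M|$ tokens of $S$ inside $M$ untouched, it becomes a valid $\TAR{k}$-transformation of $G$ (at every step the total token count is the $G'$-count plus $|S \cap M|$, which stays $\ge (k-|S\cap M|)+|S\cap M| = k$) — here we use that $M$ is a module, so a token in $G'$ is adjacent to a vertex of $M$ iff it is adjacent to all of $M$, hence adding $G'$-tokens never conflicts with the retained $M$-tokens as long as the $G'$-set avoids $N(M)$... and actually we must be slightly careful: the retained tokens of $S\cap M$ might be adjacent (in $G$) to some vertex the $G'$-transformation tries to add. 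This is exactly where modularity is used: $S \cap M \neq \emptyset$ forces every vertex of $N(M)$ to be excluded from any independent set containing $S\cap M$, so I would instead run Theorem~\ref{thm:lambda} on $G'' := G - M - N(M)$, or equivalently argue that the relevant $\lambda$ is unaffected. Having reached $B$ with $|B| \ge k$ in $G$ (with all of $S\cap M$ still present), we now remove the tokens of $M$ one at a time — each removal is a legal $\TAR{k}$ move because after each removal we still have $\ge |B| \ge k$ tokens — arriving at $B \cap M = \emptyset$, i.e. the desired $S'$. Conversely, if no such reconfiguration exists: I would argue that if $S \sevstep_k S'$ with $S' \cap M = \emptyset$, then truncating the reconfiguration sequence at the first moment $\sigma$ where $M$ becomes empty gives a prefix; the restriction of all these sets to $V \setminus M$ is a $\TAR{(k - c)}$-reconfiguration in $G''$ starting from $S \setminus M$, where $c = |S \cap M|$ — wait, the $M$-part shrinks along the way, so the bound to use is $\TAR{k - |S\cap M|}$ up to the point $M$ is non-empty and then it is automatically $\ge k$ outside. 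The cleanest statement: $S \sevstep_k S'$ with $S'\cap M = \emptyset$ iff $\lambda(G - M - N(M)\cdot[S\cap M \ne \emptyset], S \setminus M, \max\{k - |S\cap M|, 0\}) \ge k$; both directions follow by the lift/project argument above, and the value is computable by Theorem~\ref{thm:lambda} in time $O^*(2^{\mw(G)})$.

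\textbf{Main obstacle.} The delicate point is the interaction between the tokens we keep frozen inside $M$ and the tokens we move around outside: a $\TAR{k'}$-transformation that is valid in $G - M$ need not, a priori, stay independent in $G$ once we union it with $S \cap M$, since a moved token could be adjacent to some vertex of $M$ that is not in $S \cap M$. The resolution is the module property: as long as $S \cap M \neq \emptyset$, any independent set extending $S \cap M$ avoids all of $N(M)$, so we should perform the whole outside-of-$M$ reconfiguration inside $G - M - N(M)$ (when $S\cap M\neq\emptyset$) — or, if $S \cap M = \emptyset$ to begin with, there is nothing to do. Getting this case distinction and the exact threshold $\max\{k-|S\cap M|,0\}$ right, and verifying the projection direction carefully (that the prefix of a witnessing sequence really restricts to a valid bounded-threshold sequence in the smaller graph, using that removing a token from $M$ in $G$ corresponds to a no-op outside $M$), is where the real work lies; everything else is bookkeeping and an invocation of Theorem~\ref{thm:lambda}.
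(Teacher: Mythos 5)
Your ``only if'' direction has a genuine gap. You claim that if $S\sevstep_k S'$ with $S'\cap M=\emptyset$, then the restriction of (a prefix of) a witnessing sequence to $V\setminus M$ is a $\TAR{k-c}$-reconfiguration in $G-M-N(M)$, where $c=|S\cap M|$; to justify the threshold you assert ``the $M$-part shrinks along the way.'' That is false: the $M$-part can \emph{grow} along the way, and an optimal sequence may need to exploit exactly this. If $S\cap M$ is not a maximum independent set of $G[M]$, the sequence can first add tokens into $M$, creating slack that lets the outside-of-$M$ part temporarily drop below $k-c$. Concretely, take $k=3$, $M=\{m_1,m_2\}$ with no edges and $N(M)=\emptyset$, $V\setminus M=\{x_1,x_2,y_1,y_2,y_3\}$ with the only edges $x_1y_1,x_1y_2,x_1y_3$, and $S=\{m_1,x_1,x_2\}$. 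Here $\lambda(G-M,\{x_1,x_2\},2)=2<3$, so your test answers no; yet
$\{m_1,x_1,x_2\}\onestep\{m_1,m_2,x_1,x_2\}\onestep\{m_1,m_2,x_2\}\onestep\{m_1,m_2,x_2,y_1\}\onestep\cdots\onestep\{m_1,m_2,x_2,y_1,y_2,y_3\}\onestep\{m_2,x_2,y_1,y_2,y_3\}\onestep\{x_2,y_1,y_2,y_3\}$
is a valid $\TAR{3}$-reconfiguration emptying $M$.

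The paper sidesteps this by \emph{keeping $M$ in the auxiliary graph}: it runs the algorithm of Theorem~\ref{thm:lambda} on $H:=G-N(M)$ (not $G-M-N(M)$) with threshold $k$, obtains a largest $R$ with $S\stackrel{H}{\sevstep_k}R$, and answers yes iff $|R\setminus M|\ge k$, in which case it appends deletions of the vertices of $R\cap M$. The correctness argument is also different: a shortest witnessing sequence ending at a set $T$ with $T\cap M=\emptyset$ has all intermediate sets meeting $M$, hence avoiding $N(M)$, so it lives entirely in $H$; then $T\cup A$ (with $A$ a maximum independent set of $G[M]$) is also $\TAR{k}$-reachable in $H$ and has size $|T\setminus M|+|A|\ge k+|A|>|R|$ whenever $|R\setminus M|<k$, contradicting maximality of $R$. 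Your lifting (``if'') direction and the observation that intermediate sets avoid $N(M)$ are fine, but replacing the problem by a $\lambda$-computation on $G-M-N(M)$ with the fixed threshold $k-|S\cap M|$ discards precisely the interaction between $M$ and the rest that makes the subproblem nontrivial; one must optimize jointly over $G[M]$ and its complement, which is what $\lambda(H,S,k)$ does.
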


\begin{proof}

We assume that $S\cap M\neq \emptyset$ (otherwise we simply return $S$). 

Let $H$ be the graph obtained by deleting from $G$ all vertices of $V\setminus
M$ that have a neighbor in $M$. We invoke the algorithm of Theorem
\ref{thm:lambda} to compute a set $R$ in $H$ such that
$S\stackrel{H}{\sevstep_k}R$ and $|R|=\lambda(H,S,k)$. If $|R\setminus M|\ge
k$ then we return as solution the set $R\setminus M$, and as transformation
the transformation sequence returned by the algorithm, to which we append moves
that delete all vertices of $R\cap M$. If $|R\setminus M|<k$ we answer that
no such set exists.

Let us now argue for correctness. If the algorithm returns a set $S' := R\setminus M$, it also
returns a $\TAR{k}$ transformation from $S$ to $S'$ in $H$; this is also a
transformation in $G$, and since $S'\cap M=\emptyset$, the solution is
correct. 

Suppose then that the algorithm returns that no solution exists, but for the
sake of contradiction there exists a $T$ with $S\stackrel{G}{\sevstep_k}T$ and
$T\cap M=\emptyset$. Among all such sets $T$ select the one at minimum
reconfiguration distance from $S$ and let $S_0=S, S_1,\ldots, S_{\ell}=T$ be a
shortest reconfiguration sequence. We claim that this is also a valid
reconfiguration sequence in $H$. Indeed, for all $j\in[\ell-1]$, the set $S_j$
contains a vertex from $M$ (otherwise we would have a shorter sequence),
therefore may not contain any deleted vertex.  As a result, if a solution $T$
exists, then $S\stackrel{H}{\sevstep_k}T$. Let $A$ be a maximum independent set
of $G[M]$.  We observe that (i) $|T\setminus M|\ge k$ since $T$ is reachable
with $\TAR{k}$ moves and $T\cap M=\emptyset$ (ii) $T\stackrel{H}{\sevstep_k}
(T\cup A)$.  However, this gives a contradiction, because we now have
$S\stackrel{H}{\sevstep_k} (T\cup A)$ and this set is strictly larger than the
set returned by the algorithm of Theorem \ref{thm:lambda} when computing
$\lambda(H,S,k)$.  
\qed
\end{proof}

\begin{lemma}\label{lem:delete2} Let $G$ be a graph, $k$ an integer, $M$ a
module of $G$, and $S,T$ two independent sets of $G$ such that $S\cap M = T\cap
M=\emptyset$.  Let $A$ be a maximum independent set of $G[M]$. Then, for all
$u\in M\setminus A$ we have $S\stackrel{G}{\sevstep_k}T$ if and only if
$S\stackrel{G-u}{\sevstep_k}T$.\end{lemma}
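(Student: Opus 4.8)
The plan is to mimic the structure of the proof of Lemma~\ref{lem:delete}, adapting it to the two-sided setting where both endpoints avoid $M$. As before, one direction is free: any $\TAR{k}$ transformation in $G-u$ is also one in $G$, so $S\stackrel{G-u}{\sevstep_k}T$ immediately implies $S\stackrel{G}{\sevstep_k}T$. The work is in the converse. So assume $S\stackrel{G}{\sevstep_k}T$ and fix a \emph{shortest} reconfiguration sequence $S_0=S, S_1,\ldots,S_\ell=T$ in $G$. I want to rewrite it into a sequence $S_0',\ldots,S_\ell'$ that never touches $u$ (in fact never leaves $M\cap S_i'\subseteq A$), while keeping $|S_i'|=|S_i|$ and $S_i'\setminus M = S_i\setminus M$. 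Since $S$ and $T$ already avoid $M$, the endpoints are unchanged ($S_0'=S$, $S_\ell'=T$), so such a sequence is exactly a valid $\TAR{k}$ transformation from $S$ to $T$ in $G-u$.

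The construction is the same inductive case analysis as in Lemma~\ref{lem:delete}, maintaining the invariant $(S_i'\cap M)\subseteq A$, $S_i'\setminus M = S_i\setminus M$, $|S_i'|=|S_i|$. For a move outside $M$ (adding or removing some $v\notin M$) we copy the move verbatim; independence of $S_i'$ follows because $S_i'\setminus M$ equals $S_i\setminus M$ and the relationship of $v$ to $M$-vertices is governed by the module property, so $v$ is adjacent to some/none of $M$ exactly as in $G$, and since $S_i'\cap M\subseteq A$ which is independent, no new conflict arises. For a removal of a vertex $v\in S_{i-1}\cap M$: by the size invariant $|S_{i-1}'\cap M| = |S_{i-1}\cap M|\ge 1$, so we may remove an arbitrary $v'\in S_{i-1}'\cap M$. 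For an addition of a vertex $v\in M$: since $S_i=S_{i-1}+v$ is independent we have $|S_{i-1}\cap M|<\alpha(G[M])=|A|$, hence $|S_{i-1}'\cap M|<|A|$ and there is some $v'\in A\setminus S_{i-1}'$ to add; $v'$ has the same outside-$M$ neighborhood as $v$ (module), so $S_i'$ stays independent. Each step preserves the invariant and the size, so the resulting sequence is a valid $\TAR{k}$ sequence avoiding $u$.

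One point needs slightly more care than in Lemma~\ref{lem:delete}: there, $u$ was known not to be in $S$ because $S\cap M\subseteq A$ and $u\notin A$; here the hypothesis $S\cap M=\emptyset$ already gives $u\notin S$ and $u\notin T$, so the rewritten endpoints genuinely coincide with $S$ and $T$ and we do not even need the ``$A$ extends $S\cap M$'' condition that Lemma~\ref{lem:delete} required — indeed $S\cap M=\emptyset\subseteq A$ trivially. The main (mild) obstacle is just making sure that in every case the size constraint $|S_i'|\ge k$ is inherited: this is immediate since $|S_i'|=|S_i|\ge k$ throughout. So there is no real difficulty beyond carefully checking the four move types; the argument is essentially Lemma~\ref{lem:delete} with the starting observation replaced by the stronger hypothesis on $S$, and with the symmetric fact that $T$ is likewise untouched.
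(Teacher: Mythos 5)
Your proof is correct and follows essentially the same route as the paper's: the paper likewise observes that the ``easy'' direction holds because $u\notin S\cup T$, and for the converse it inductively rewrites the sequence $S_0,\ldots,S_\ell$ into $S_0',\ldots,S_\ell'$ satisfying $|S_i'|=|S_i|$, $S_i'\setminus M=S_i\setminus M$, $S_i'\cap M\subseteq A$, by copying moves outside $M$ and substituting arbitrary vertices of $A$ for moves inside $M$, exactly as in Lemma~\ref{lem:delete}. Your extra remarks (that the endpoints coincide because $S\cap M=T\cap M=\emptyset$, and that independence is preserved via the module property) are just slightly more spelled-out versions of what the paper leaves implicit.
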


\begin{proof}

The proof is similar to that of Lemma \ref{lem:delete}. Specifically, since
$u\not\in S$ and $u\not\in T$, it is easy to see that
$S\stackrel{G-u}{\sevstep_k}T$ implies $S\stackrel{G}{\sevstep_k}T$. Suppose
then that $S\stackrel{G}{\sevstep_k}T$ and we have a sequence $S_0=S,S_1,\ldots, S_{\ell}=T$.
We construct a sequence $S_0'=S, S_1', \ldots,S_{\ell}'$ 
such that for all $i\in[\ell]$ we have $|S_i|=|S_i'|$, $S_i\setminus M = S_i'\setminus M$,
and $S_i' \cap M \subseteq A$. This can be done inductively: for
$S_0'$ the desired properties hold; and for all $i\in[\ell]$ we can prove that
if the properties hold for $S_{i-1}'$, then we can construct $S_i'$ in the same
way as in the proof of Lemma \ref{lem:delete} (namely, we perform the same
moves as $S_i$ outside of $M$, and pick an arbitrary vertex of $A$ when $S_i$
adds a vertex of $M$). 
\qed
\end{proof}

\begin{lemma}\label{lem:ndreach} There is an algorithm which, given a graph
$G$, an integer $k$, and two independent sets $S,T$, decides if $S\sevstep_k T$
in time $O^*(2^{\nd(G)})$.  \end{lemma}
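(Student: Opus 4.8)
The plan is to reduce the reconfiguration question on a graph of small neighborhood diversity to a reachability question in an auxiliary graph of size $O^*(2^{\nd(G)})$, in the same spirit as the proof of Lemma~\ref{lem:ndlambda}. Assume $G=(V,E)$ is partitioned into $r\le\nd(G)$ modules $V_1,\ldots,V_r$, each inducing a clique or an independent set. As in Lemma~\ref{lem:ndlambda}, for any module $V_i$ that induces a clique we may keep only one of its vertices: any independent set intersects such a module in at most one vertex, and since $V_i$ is a module, which particular vertex is chosen is irrelevant for independence. So we may assume every $V_i$ is an independent module, and we simply replace $S$ and $T$ by the independent sets obtained this way (this does not change reachability).

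Next I would reduce to the ``canonical'' case where each module is either fully contained in the current set or disjoint from it. Given the input sets $S$ and $T$, first note that from $S$ we may always reach the set $\hat S$ obtained by adding to $S$, for every module $V_i$ with $S\cap V_i\neq\emptyset$, all remaining vertices of $V_i$; these additions are valid $\TAR{k}$ moves since they only increase the set, and they are reversible. Hence $S\sevstep_k T$ if and only if $\hat S\sevstep_k \hat T$, so it suffices to decide reachability between canonical sets. Now build the auxiliary graph $G'$ exactly as in Lemma~\ref{lem:ndlambda}: its vertices are the canonical independent sets of $G$ of size at least $k$ (at most $2^r$ of them), with an edge between $S_1$ and $S_2$ whenever $\symdiff{S_1}{S_2}=V_i$ for some $i\in[r]$. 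As argued there, $S_1\sevstep_k S_2$ holds for canonical $S_1,S_2$ of size $\ge k$ if and only if they lie in the same connected component of $G'$: a path in $G'$ is translated into a $\TAR{k}$ sequence in $G$ by, at each edge, first adding the vertices of the module being switched in one by one and then removing the vertices being switched out one by one (all intermediate sets have size at least $\min\{|S_1|,|S_2|\}\ge k$ along a single edge, after the canonicalization above guarantees the endpoints are valid); conversely, any $\TAR{k}$ sequence can be "rounded" to pass through canonical configurations without leaving $G'$.

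The algorithm therefore constructs $G'$ (of size $O(r2^r)=O^*(2^{\nd(G)})$), computes $\hat S$ and $\hat T$, checks that both have size at least $k$ (if not, and they are not trivially equal, answer no), and reports whether $\hat S$ and $\hat T$ lie in the same component of $G'$, which is a linear-time graph search. The total running time is $O^*(2^{\nd(G)})$. The one point requiring a little care — the "main obstacle", though it is minor — is the translation between $G'$-paths and genuine $\TAR{k}$ sequences across a single edge $\symdiff{S_1}{S_2}=V_i$: when $|V_i|\ge 2$ one cannot swap a whole module in one reconfiguration step, so one must add the $|V_i|$ new vertices first (each addition keeping size $\ge k$ since we are growing) and only then delete the old ones, and verify that the transient maximum size does not violate anything — which it does not, since $\TAR{k}$ imposes only a lower bound $k$ on set sizes. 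With that observation the equivalence "reachable $\iff$ same component of $G'$" goes through just as in Lemma~\ref{lem:ndlambda}.
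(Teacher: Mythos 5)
Your overall plan (reduce to independent modules, canonicalize so each module is fully in or fully out, build the auxiliary $2^r$-vertex graph $G'$ and check connectivity) matches the paper's for the ``all modules independent'' phase, and the translation between $G'$-paths and $\TAR{k}$ sequences is handled correctly. However, there is a genuine gap in the first step: the claim that for a clique module $V_i$ you ``may keep only one of its vertices'' and ``replace $S$ and $T$ by the independent sets obtained this way (this does not change reachability)'' is false. It is true for Lemma~\ref{lem:ndlambda}, which only asks for the largest \emph{reachable} set, but for point-to-point reachability the identity of the vertex occupying a clique module matters.

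Concretely, take $G=K_2$ on $\{u,v\}$, $S=\{u\}$, $T=\{v\}$, $k=1$. Here $\nd(G)=1$ with the single clique module $\{u,v\}$. No $\TAR{1}$ move from $\{u\}$ is possible (removing $u$ drops below $k$, adding $v$ is forbidden by the edge), so $S\notsevstep_k T$. Your reduction deletes $v$, identifies $T$ with $\{u\}=S$, and answers ``yes.'' The same failure occurs in larger instances, e.g.\ $K_2$ plus $k-1$ isolated vertices with $S=\{u\}\cup W$, $T=\{v\}\cup W$. What is missing is exactly the corner-case handling the paper does with Lemma~\ref{lem:empty}: for each clique module $V_i$, test whether a configuration avoiding $V_i$ is reachable from $S$ and from $T$. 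If the two answers differ, reject. If both can reach $V_i$-free sets $S',T'$, replace $S,T$ by $S',T'$ and only then shrink the clique to one vertex (justified by Lemma~\ref{lem:delete2}, not by a blanket identification). If neither can leave $V_i$, then $S\sevstep_k T$ forces $S\cap V_i=T\cap V_i$; if they agree, delete $N[V_i]$ and decrease $k$ by one; otherwise reject. Without this case analysis the reduction to independent modules is unsound, so the proposal as written does not prove the lemma.
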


\begin{proof} The proof is similar to that of Lemma \ref{lem:ndlambda}, but we
need to carefully handle some corner cases. We are given a partition of $G$
into $r\le\nd(G)$ sets $V_1,\ldots, V_r$, such that each $V_i$ induces a clique
or an independent set. Suppose $V_i$ induces a clique. We use the algorithm of
Lemma \ref{lem:empty} with input ($G$,$S$,$V_i$,$k$) and with input
($G$,$T$,$V_i$,$k$) to decide if it is possible to empty $V_i$ of tokens. If
the algorithm gives different answers we immediately reject, since there is a
configuration that is reachable from $S$ but not from $T$. If the algorithm
returns $S',T'$ with $S'\cap V_i=T'\cap V_i=\emptyset$, then the problem
reduces to deciding if $S'\sevstep_k T'$. However, by Lemma \ref{lem:delete2}
we can delete all the vertices of $V_i$ except one and this does not change the
answer. Finally, if the algorithm responds that $V_i$ cannot be empty in any
configuration reachable from $S$ or $T$ then, if $S\cap V_i\neq T\cap V_i$ we
immediately reject, while if $S\cap V_i = T\cap V_i$ we delete from the input
$V_i$ and all its neighbors and solve the reconfiguration problem in the
instance ($G[V\setminus N[V_i]]$, $k-1$, $S\setminus V_i$, $T\setminus V_i$).

After this preprocessing all sets $V_i$ are independent. We now construct an
auxiliary graph $G'$ as in Lemma \ref{lem:ndlambda}, namely, our graph has a
vertex for every independent set $S$ of $G$ with $|S|\ge k$ such that for all
$i\in[r]$ either $S\cap V_i=\emptyset$ or $V_i\subseteq S$. Again, we have an
edge between $S_1,S_2$ if $\symdiff{S_1}{S_2}=V_i$ for some $i\in[r]$. We can
assume without loss of generality that $S,T$ are represented in this graph (if
there exists $V_i$ such that $0<|S\cap V_i|<|V_i|$ we add to $S$ all remaining
vertices of $V_i$).  Now, $S\sevstep T$ if and only if $S$ is reachable from
$T$ in $G'$, and this can be checked in time linear in the size of $G'$.
\qed
\end{proof}

\begin{theorem}
[$\TARrule$] 
\label{thm:TAR}
There is an algorithm which, given a graph $G$, an integer $k$,
and two independent sets $S,T$, decides if $S\sevstep_k T$ in time
$O^*(2^{\mw(G)})$.  
\end{theorem}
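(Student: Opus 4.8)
The plan is to combine the three lemmas from this subsection with a reduction that handles the connected and disconnected cases separately, mirroring the structure used by Bonsma. First I would observe that by Lemma~\ref{lem:ndreach} we may assume $\nd(G)$ is large — more precisely, that $G$ does not yet have small neighborhood diversity — since otherwise we run that algorithm directly in time $O^*(2^{\nd(G)})$; but this alone is not enough, because $\nd(G)$ can be arbitrarily larger than $\mw(G)$. The real engine is the following: take a non-trivial modular partition $V_1,\ldots,V_r$ of $G$ with $r\le\mw(G)$ (computable in polynomial time). If some module $V_i$ is a single vertex, or more generally if some module induces an edgeless graph on more than one vertex, the goal is to find an \emph{irrelevant vertex} — a vertex $u$ whose deletion changes neither the answer nor the modular-width — and recurse on $G-u$, which strictly decreases $|V(G)|$.

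The key step is identifying such an irrelevant vertex when $G$ is connected. For each module $M=V_i$, I would run the algorithm of Lemma~\ref{lem:empty} twice, with inputs $(G,S,M,k)$ and $(G,T,M,k)$, to decide whether $M$ can be emptied of tokens starting from $S$ and from $T$ respectively. If the two answers differ, we reject immediately, since there is a configuration reachable from one of $S,T$ but not the other. If both say ``yes,'' producing $S'$ and $T'$ with $S'\cap M=T'\cap M=\emptyset$ and $S\sevstep_k S'$, $T\sevstep_k T'$, then by transitivity (Observation~\ref{obs:basic}) it suffices to decide $S'\sevstep_k T'$, and now Lemma~\ref{lem:delete2} lets us delete all but a maximum independent set $A$ of $G[M]$ without changing the answer — so if $G[M]$ has an edge, or if $|M|>|A|$, we have found vertices to delete and we recurse. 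If instead both answers say ``no'' (the module can never be emptied), then in \emph{every} reachable configuration $M$ contains a token; since $M$ is a module, the neighborhood $N(M)$ is never in any such configuration, so $N(M)$ vertices are irrelevant in a different sense: if $S\cap N(M)\ne\emptyset$ or $T\cap N(M)\ne\emptyset$ we reject, and otherwise we may delete $N[M]$, replace $M$ by a single vertex, and decrement $k$ by one (since exactly one token permanently sits in $M$), recursing on a strictly smaller graph. Iterating, we either terminate with a decision or reach a graph with no non-trivial module other than singletons and non-edgeless modules that cannot be emptied — and one checks that such a graph has neighborhood diversity bounded by $\mw(G)$ (each surviving module is a clique-like block forced to be non-empty, or has been collapsed), at which point Lemma~\ref{lem:ndreach} finishes the job.

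For the disconnected case, I would handle each component $C$ separately: the question ``$S\sevstep_k T$'' decomposes across components, but the threshold $k$ is global, so within a component $C$ the relevant local threshold depends on how many tokens the other components can spare. Concretely, using Theorem~\ref{thm:lambda} one computes, for each component $C$ and each of $S,T$, the maximum number of tokens reachable inside $C$, and more importantly the minimum number of tokens to which $C$ can be reduced while staying above the global constraint; Bonsma's argument then shows that $S\sevstep_k T$ holds if and only if in every component the local sets are connected under the appropriate local $\TAR{k'}$ rule, where $k'$ is determined by the total maximum extensibility of the other components. This is the step I expect to be the main obstacle: getting the bookkeeping of local thresholds exactly right, and in particular verifying that the ``slack'' computed via $\lambda(\cdot,\cdot,\cdot)$ in the other components is simultaneously achievable, requires care — one must argue that tokens can be parked at their maxima in all other components at once, which follows because the reconfigurations in distinct components are independent. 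Once that reduction is in place, each per-component subproblem is connected and is solved by the iterative deletion procedure above, and since every recursive call strictly decreases $|V(G)|$ while the modular-width never increases, the total running time is $O^*(2^{\mw(G)})$.
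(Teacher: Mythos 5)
Your plan matches the paper's high-level structure (connected case with irrelevant-vertex deletion using Lemmas~\ref{lem:empty} and~\ref{lem:delete2}, disconnected case with per-component subinstances using Theorem~\ref{thm:lambda}), but there is a genuine error in how you handle the connected case when a module $M=V_i$ cannot be emptied of tokens.

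You propose, in that situation, to delete $N[M]$, replace $M$ by a single vertex, and decrement $k$ by one, justified by the claim that ``exactly one token permanently sits in $M$.'' Both the claim and the reduction are wrong. First, the number of tokens in $M$ can fluctuate over the reconfiguration — the only guarantee from Lemma~\ref{lem:empty} answering ``no'' is that $|S_i\cap M|\ge 1$ at every step, not that it is always $1$. Second, collapsing $M$ to a single vertex erases structure inside $G[M]$ that determines reachability. Concretely, let $M=\{a,b,c\}$ induce a triangle, let $z$ be adjacent to all of $M$, and let $x$ be adjacent only to $z$; take $S=\{a,x\}$, $T=\{b,x\}$, $k=2$. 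No $\TAR{2}$ move is available from $S$ at all, so the answer is ``no.'' Your reduction deletes $N[M]=\{a,b,c,z\}$, replaces $M$ by a single vertex $m$, sets $k=1$, and yields $S'=T'=\{m,x\}$, wrongly answering ``yes.'' The paper instead deletes only a \emph{single} vertex $u\in N(V_i)$ (which exists since $G$ is connected), justified because every reachable configuration intersects $V_i$ and $u$ is adjacent to all of $V_i$, hence never usable; repeated deletion eventually disconnects $V_i$ from the rest, and the disconnected case — which keeps $G[V_i]$ intact and recurses with the correct local threshold $k-|S\setminus C_i|$ — handles the module properly.

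Your disconnected case is in the right spirit but imprecise on exactly the point you flag as ``the main obstacle.'' The paper resolves it cleanly: it first replaces $S$ and $T$ by $\lambda(G,S,k)$-sized sets via Theorem~\ref{thm:lambda}, then proves that after this normalization any reachable configuration $S'$ satisfies $|S'\cap C_i|\le|S\cap C_i|$ for every component $C_i$. This single invariant simultaneously shows that $|S\cap C_i|=|T\cap C_i|$ is necessary and that the per-component threshold $k-|S\setminus C_i|$ is the right one, eliminating the need for any separate ``simultaneous slack'' argument. You should replace your hand-wavy appeal to ``Bonsma's argument'' with this explicit normalization and invariant to make the reduction rigorous.
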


\begin{proof}
Our algorithm considers two cases: if $G$ is connected we will attempt to
simplify $G$ in a way that eventually produces either a graph with small
neighborhood diversity or a disconnected graph; if $G$ is disconnected we will
recursively solve an appropriate subproblem in each component.

First, suppose that $G$ is connected. We compute a modular decomposition of $G$
which gives us a partition of $V$ into $r\le\mw(G)$ modules $V_1,\ldots, V_r$.
We may assume that $r \ge 2$ since otherwise $G$ has at most $\mw(G)$
vertices and the claimed running time is trivial in that case.
If for all $i\in[r]$ we have that $G[V_i]$ is an independent set, then
$\nd(G)\le r$ and we invoke the algorithm of Lemma \ref{lem:ndreach}. Suppose
then that for some $i\in [r]$, $G[V_i]$ contains at least one edge. We invoke
the algorithm of Lemma \ref{lem:empty} on input $(G,S,V_i,k)$ and on input
$(G,T,V_i,k)$. If the answers returned are different, we decide that $S$ is not
reachable from $T$ in $G$, because from one set we can reach a configuration
that contains no vertex of $V_i$ and from the other we cannot. 

If the algorithm of Lemma \ref{lem:empty} returned to us two sets $S',T'$ with
$S'\cap V_i = T'\cap V_i = \emptyset$ then by transitivity we know $S\sevstep
T$ if and only if $S'\sevstep T'$. We compute a maximum independent set $A$ of
$G[V_i]$ and delete from our graph a vertex $u\in V_i\setminus A$. Such a
vertex exists, since $G[V_i]$ is not an independent set. By Lemma
\ref{lem:delete2} deleting $u$ does not affect whether $S'\sevstep T'$, so we
call our algorithm with input $(G-u,k,S',T')$, and return its response. 

On the other hand, if the algorithm of Lemma \ref{lem:empty} concluded that no
set reachable from either $S$ or $T$ has empty intersection with $V_i$, we find
a vertex $u\in V\setminus V_i$ that has a neighbor in $V_i$ and delete it, that
is, we call our algorithm with input $(G-u,k,S,T)$. Such a vertex $u$ exists
because $G$ is connected. This recursive call is correct because any
configuration reachable from $S$ or $T$ contains some vertex of $V_i$, which is
a neighbor of $u$, so no reachable configuration uses $u$.

We note that if $G$ is connected, all the cases described above will make a
single recursive call on an input that has strictly fewer vertices.

Suppose now that $G$ is not connected and there are $s$ connected components
$C_1,C_2,\ldots, C_s$. We will assume that $|S|=\lambda(G,S,k)$ and
$|T|=\lambda(G,T,k)$. This is without loss of generality, since we can invoke
the algorithm of Theorem \ref{thm:lambda} and in case $|S|<\lambda(G,S,k)$
replace $S$ with the set $S'$ returned by the algorithm while keeping an
equivalent instance (similarly for $T$).

As a result, we can assume that $|S|=|T|$, otherwise the answer is trivially
no. More strongly, if there exists a component $C_i$ such that $|S\cap C_i|\neq
|T\cap C_i|$ we answer no. To see that this is correct, we argue that for all
$S'$ such that $S\sevstep_k S'$ we have $|S'\cap C_i|\le |S\cap C_i|$. Indeed,
suppose there exists $S'$ such that for some $i\in[s]$ we have $|S'\cap
C_i|>|S\cap C_i|$ and $S\sevstep S'$. Among such configurations $S'$ select one
that is at minimum reconfiguration distance from $S$ and let $S_0=S,S_1,\ldots,
S_{\ell}=S'$ be a shortest reconfiguration from $S$ to $S'$.  Then for all
$j\in[\ell]$ we have $|S\setminus C_i|\ge |S_j\setminus C_i|$ (otherwise we
would have an $S'$ that is at shorter reconfiguration distance from $S$). This
means that the sequence $S_0\cap C_i, S_1\cap C_i,\ldots, S_{\ell}\cap C_i$ is
a $\TAR{k-|S\setminus C_i|}$ transformation of $S\cap C_i$ to $S'\cap C_i$ in
$G[C_i]$. But this transformation proves that the set $(S\setminus C_i)\cup
(S'\cap C_i)$ is $\TAR{k}$ reachable from $S$ in $G$, and since this set is
larger than $S$ we have a contradiction.

For each $i\in[s]$ we now consider the reconfiguration instance given by the
following input: $(G[C_i],k-|S\setminus C_i|,S\cap C_i, T\cap C_i)$. We call
our algorithm recursively for each such instance. If the answer is yes for all
these instances we reply that $S$ is reachable from $T$, otherwise we reply
that the sets are not reachable.

To argue for correctness we use induction on the depth of the recursion.
Suppose that the algorithm correctly concludes that the answer to all
sub-instances is yes. Then, there does indeed exist a transformation $S\sevstep
T$ as follows: starting from $S$, for each $i\in[s]$ we keep $S\setminus C_i$
constant and perform in $G[C_i]$ the transformation $(S\cap
C_i)\stackrel{G[C_i]}{\sevstep}_{k-|S\setminus C_i|} (T\cap C_i)$. At each step
this gives a configuration where $S$ and $T$ agree in more components.
Furthermore, since $|S\cap C_i|=|T\cap C_i|$ for all $i\in[s]$, this is a valid
$\TAR{k}$ reconfiguration.

Suppose now that the answer is no for the instance $(G[C_i],k-|S\setminus C_i|,
S\cap C_i, T\cap C_i)$.  Suppose also, for the sake of contradiction, that
there exists a $\TAR{k}$ reconfiguration $S_0=S, S_1,\ldots, S_{\ell}=T$. As
argued above, any configuration $S'$ reachable from $S$ has $|S'\cap C_{i'}|\le
|S\cap C_{i'}|$ for all $i'\in[s]$. This means that $|S\setminus C_i|\ge
|S_j\setminus C_i|$ for all $j\in[\ell]$. Hence, the sequence $S_0\cap C_i,
S_1\cap C_i,\ldots, S_{\ell}\cap C_i$ gives a valid $\TAR{k-|S\setminus C_i|}$
reconfiguration in $G[C_i]$, which is a contradiction.

Finally, it is not hard to see that the algorithm runs in time
$O^*(2^{\mw(G)})$, because in the case of disconnected graphs we make a single
recursive call for each component.  
\qed
\end{proof}

Theorem~\ref{thm:TAR} and Proposition~\ref{prop:TJ=TAR} give an FPT algorithm with the same running time for $\TJ$.
\begin{corollary}
[$\TJ$]
\label{cor:TJ}
There is an algorithm which, given a graph $G$ and two independent sets $S,T$,
decides the $\TJ$ reachability between $S$ and $T$ in time $O^*(2^{\mw(G)})$.  
\end{corollary}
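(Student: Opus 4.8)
The plan is to reduce $\TJ$ reachability to $\TAR{k}$ reachability and then call the algorithm of Theorem~\ref{thm:TAR} as a black box. Since a single $\TJ$ move removes one token and adds one token, it preserves the size of the current independent set; hence the first thing I would do is check whether $|S| = |T|$, and reject immediately if not.

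Assuming $|S| = |T|$, I would set $k := |S| - 1$ and run the algorithm of Theorem~\ref{thm:TAR} on the instance $(G, k, S, T)$, outputting whatever it reports. Correctness follows directly from Proposition~\ref{prop:TJ=TAR}: for independent sets of equal size, $T$ is reachable from $S$ under $\TJ$ if and only if $S \sevstep_{|S|-1} T$, which is precisely what the $\TAR{k}$ algorithm decides with this choice of $k$. (In the degenerate case $|S| = 0$ we have $k = -1 \le 0$, so by Observation~\ref{obs:basic} the answer is trivially yes, consistent with the empty set being $\TJ$-reachable from itself in zero steps; the same reasoning covers any $k \le 0$.)

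For the running time, the size comparison is polynomial and the single invocation of the algorithm of Theorem~\ref{thm:TAR} runs in $O^*(2^{\mw(G)})$, so the whole procedure runs in $O^*(2^{\mw(G)})$. There is no genuine obstacle here: all of the difficulty has already been absorbed into Theorem~\ref{thm:TAR}, and Proposition~\ref{prop:TJ=TAR} is a known characterization quoted from the literature, so the corollary is essentially immediate once Theorem~\ref{thm:TAR} is in hand.
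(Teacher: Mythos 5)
Your proposal is correct and matches the paper's approach exactly: the paper derives the corollary by combining Proposition~\ref{prop:TJ=TAR} (reducing $\TJ$ reachability between equal-size sets to $\TAR{|S|-1}$ reachability) with the algorithm of Theorem~\ref{thm:TAR}. Your additional remarks about the size check and the degenerate $k\le 0$ case are sound but not needed beyond what Proposition~\ref{prop:TJ=TAR} and Observation~\ref{obs:basic} already cover.
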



\section{FPT Algorithm for Modular-Width under $\TS$}
\label{sec:TS}

We now present an FPT algorithm deciding the $\TS$-reachability parameterized by modular-width.
The problem under $\TS$ is much easier than the one under $\TARrule$
since we can reduce the problem to a number of constant-size instances that can be considered separately.
To see this, we first observe that the components can be considered separately.
We then further observe that we only need to solve the case
where each maximal nontrivial module contains at most one vertex of the current independent set.
Finally, we show that the reachability problem on the reduced case thus far is 
equivalent to a generalized reachability problem on a graph of order at most $\mw(G)$, where $G$ is the original graph.

Let $S$ and $S'$ be independent sets of $G$ with $|S| = |S'|$.
Recall that $S$ and $S'$ can be reached by one step under $\TS$ if 
$|\symdiff{S}{S'}| = 2$ and the two vertices in $\symdiff{S}{S'}$ are adjacent.
We denote this relation by $S \stackrel{G}{\onestep} S'$, or simply by 
$S \onestep S'$ if $G$ is clear from the context.
We write $S\stackrel{G}{\sevstep} S'$ (or simply $S \sevstep S'$) 
if there exists $\ell \ge 0$ and a sequence of independent sets $S_0,\dots,S_{\ell}$
 with $S_0=S$, $S_{\ell}=S'$ and for all $i\in[\ell]$ we have
$S_{i-1}\onestep S_i$. If $S\sevstep S'$ we say that $S'$ is
\emph{reachable} from $S$ under the $\TS$ rule.
Observe that the relation defined by $\sevstep$ is an equivalence relation on independent sets.

The first easy observation is that the $\TS$ rule cannot move a token to a different component
since a $\TS$ step is always along an edge (and thus within a component).
This is formalized as follows.
\begin{observation}
\label{obs:TS-component}
Let $G$ be a graph, $S,S'$ independent sets of $G$,
and $C_{1},\dots,C_{c}$ the components of $G$.
Then, $S \stackrel{G}{\sevstep} S'$ if and only if
$(S \cap V(C_{i})) \stackrel{G[V(C_{i})]}{\sevstep} (S' \cap V(C_{i}))$ for all $i \in [c]$.
\end{observation}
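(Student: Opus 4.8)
The plan is to prove both implications by manipulating $\TS$ reconfiguration sequences, exploiting the single fact that every edge of $G$ lies inside one component, so a $\TS$ move (which always slides a token along an edge) never interacts with more than one component at a time.

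For the forward direction I would take a $\TS$ reconfiguration sequence $S_{0}=S, S_{1},\dots,S_{\ell}=S'$ in $G$ and, for a fixed $i\in[c]$, look at the sequence $S_{0}\cap V(C_{i}),\dots,S_{\ell}\cap V(C_{i})$. For each step $j$, the two vertices in $\symdiff{S_{j-1}}{S_{j}}$ are adjacent, hence lie in a common component $C_{i'}$; if $i'\neq i$ then $S_{j-1}\cap V(C_{i}) = S_{j}\cap V(C_{i})$, while if $i'=i$ then $S_{j-1}\cap V(C_{i}) \onestep S_{j}\cap V(C_{i})$ in $G[V(C_{i})]$ (the restricted sets stay independent, being subsets of independent sets). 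Contracting consecutive duplicates yields a valid $\TS$ sequence in $G[V(C_{i})]$, which witnesses $(S\cap V(C_{i}))\sevstep (S'\cap V(C_{i}))$. Along the way this also records that $|S\cap V(C_{i})| = |S'\cap V(C_{i})|$, since a $\TS$ move preserves cardinality within its component.

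For the backward direction I would compose the per-component sequences. Assuming a $\TS$ sequence in $G[V(C_{i})]$ from $S\cap V(C_{i})$ to $S'\cap V(C_{i})$ for each $i$, I process the components in order $1,2,\dots,c$; when processing $C_{i}$, I keep the tokens on all other components frozen at their current positions and replay the $C_{i}$-sequence move by move. Each such move slides a token along an edge of $C_{i}$, and since there are no edges between distinct components, the union with the (independent) token sets of the other components is again an independent set, so every move is a legal $\TS$ move in $G$. After all components have been processed the configuration is exactly $S'$, giving $S\sevstep S'$.

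The bookkeeping is entirely routine; the only point needing a little care is the definition-level subtlety that restricting a $\TS$ sequence to a component can turn a move into a ``no-op'' (symmetric difference of size $0$ rather than $2$), which is why one must contract repeated sets before calling the restricted sequence a $\TS$ reconfiguration. I do not anticipate any genuine obstacle beyond this.
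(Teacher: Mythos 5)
Your proof is correct and fills in exactly the routine details behind the paper's one-line remark that a $\TS$ step slides a token along an edge and therefore stays within a single component; the paper itself states this as an observation without a written proof. The only point worth a note of care, which you already flag, is contracting no-op steps when restricting the sequence to a component, and your handling of it is fine.
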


The next lemma, which is still an easy one, is a key tool in our algorithm.
\begin{lemma}
\label{lem:TS-at-least-2}
Let $G$ be a graph, $M$ a module of $G$, and $S$ an independent set in $G$ such that $|S \cap M| \ge 2$.
Then, for every independent set $S'$ in $G$, $S \stackrel{G}{\sevstep} S'$ if and only if
$S' \cap N(M) = \emptyset$ and $S \stackrel{G - N(M)}{\sevstep} S'$.
\end{lemma}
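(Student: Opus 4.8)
The plan is to prove the two directions separately, resting the forward direction on a single invariant: along \emph{any} $\TS$ sequence in $G$ that starts from $S$, every configuration keeps at least two tokens inside $M$, and consequently no configuration ever places a token on a vertex of $N(M)$. The backward direction is then essentially free, since $G - N(M)$ is an induced subgraph of $G$. Before that I would record the two structural facts that drive everything: since $M$ is a module, every vertex of $N(M)$ is adjacent to \emph{all} of $M$; hence any independent set having at least one token in $M$ has no token in $N(M)$, and in particular $S \cap N(M) = \emptyset$ because $|S \cap M| \ge 2$.

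Next I would establish the invariant by induction along a $\TS$ sequence $S = S_0 \onestep S_1 \onestep \cdots \onestep S_\ell$. Assume inductively that $|S_i \cap M| \ge 2$ and $S_i \cap N(M) = \emptyset$, and let the step $S_i \onestep S_{i+1}$ slide a token from $a$ to $b$ along an edge $\{a,b\}$, so $S_{i+1} = (S_i \setminus \{a\}) \cup \{b\}$. By the induction hypothesis $a \notin N(M)$. If $b \in N(M)$, then $b$ is adjacent to all of $M$; but deleting $a$ from $S_i$ leaves at least one token in $M$ (at least two if $a \notin M$), so $S_{i+1}$ would contain $b$ together with one of its neighbours, contradicting independence. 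Hence $b \notin N(M)$, which with $a \notin N(M)$ gives $S_{i+1} \cap N(M) = \emptyset$. Finally, $|S_{i+1} \cap M| < 2$ would force $|S_i \cap M| = 2$, $a \in M$ and $b \notin M$, which would put $b$ in $N(M)$ — already excluded. So both parts of the invariant persist.

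With the invariant in hand the forward direction is immediate: if $S \stackrel{G}{\sevstep} S'$, fix a $\TS$ sequence realising it; every configuration in it — in particular $S' = S_\ell$ — avoids $N(M)$, and every slide is along an edge of $G$ whose two endpoints lie outside $N(M)$, hence an edge of $G - N(M)$; each $S_i$ is an independent set of the induced subgraph $G - N(M)$, so the same sequence witnesses $S \stackrel{G - N(M)}{\sevstep} S'$. For the backward direction, suppose $S' \cap N(M) = \emptyset$ and $S \stackrel{G-N(M)}{\sevstep} S'$ via some sequence; each of its configurations is an independent set of $G - N(M)$, and since $G - N(M)$ is an induced subgraph of $G$ these sets are independent in $G$ as well, while each slide uses an edge of $G - N(M) \subseteq G$; thus the sequence is already a valid $\TS$ sequence in $G$, giving $S \stackrel{G}{\sevstep} S'$. (Here $S$ and $S'$ do lie in $G - N(M)$: the latter by hypothesis, the former by the structural remark, since $|S\cap M|\ge 2$.)

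The only delicate point is the invariant, and within it the small case analysis showing that a single slide can neither move a token onto $N(M)$ nor drop the token count inside $M$ below two; everything else is routine bookkeeping about independent sets in induced subgraphs. I do not expect any genuine obstacle beyond carrying out that case distinction cleanly and noting explicitly that a token leaving $M$ must land in $N(M)$.
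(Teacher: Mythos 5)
Your proof is correct and follows essentially the same approach as the paper: both rest on the structural fact that, since $M$ is a module, every vertex of $N(M)$ is adjacent to all of $M$, and both argue along a $\TS$ sequence that no configuration can place a token in $N(M)$. Your version is slightly more explicit than the paper's — the paper asserts ``Since $|S\cap M|\ge 2$, we have $S_i\cap M\ne\emptyset$'' at the first bad step without spelling out that tokens can only leave $M$ by entering $N(M)$, which is exactly the bookkeeping your two-part invariant carries out.
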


\begin{proof}
The if direction holds as a $\TS$ sequence in an induced subgraph is always valid in the original graph.

To show the only-if direction, assume that $S \stackrel{G}{\sevstep} S'$.
Let $S_{0}, \dots, S_{\ell}$ be a $\TS$ sequence from $S_{0} = S$ to $S_{\ell} = S'$.
It suffices to show that no independent set in the sequence contains a vertex in $N(M)$.
Suppose to the contrary that $i$ is the first index such that $S_{i} \cap N(M) \ne \emptyset$.
Since $|S \cap M| \ge 2$, we have $S_{i} \cap M \ne \emptyset$.
As $M$ is a module, all vertices in $M$ are adjacent to all vertices in $N(M)$.
This contradicts that $S_{i}$ is an independent set.
Therefore, $S_{i} \cap N(M) = \emptyset$ for all $i$.
\qed
\end{proof}

Lemma~\ref{lem:TS-at-least-2} implies that $S$ with $|S \cap M| \ge 2$ and $S'$ with $|S' \cap M| \le 1$
are not reachable to each other.
This fact in the following form will be useful later.
\begin{corollary}
\label{cor:TS-at-most-1}
Let $G$ be a graph, $M$ a module of $G$, and $S$ an independent set in $G$ such that $|S \cap M| \le 1$.
Then, for every independent set $S'$ in $G$ such that $S \sevstep S'$, it holds that $|S' \cap M|\le 1$.
\end{corollary}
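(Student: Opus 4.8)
The plan is to derive Corollary~\ref{cor:TS-at-most-1} directly from Lemma~\ref{lem:TS-at-least-2} by contraposition. Suppose for contradiction that $S \sevstep S'$ but $|S' \cap M| \ge 2$. Since $\sevstep$ is an equivalence relation (in particular symmetric), we also have $S' \sevstep S$. Now apply Lemma~\ref{lem:TS-at-least-2} with the roles of $S$ and $S'$ swapped: the module $M$ satisfies $|S' \cap M| \ge 2$, so the lemma tells us that $S' \sevstep S$ (in $G$) if and only if $S \cap N(M) = \emptyset$ and $S' \stackrel{G - N(M)}{\sevstep} S$. In particular it forces $S \cap N(M) = \emptyset$, which is fine, but more importantly we want a contradiction with $|S \cap M| \le 1$.

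The cleaner route is the following. Run the argument of Lemma~\ref{lem:TS-at-least-2} on the reversed sequence. Concretely: take a $\TS$ sequence $S = S_0 \onestep S_1 \onestep \cdots \onestep S_\ell = S'$. Reading it backwards gives a $\TS$ sequence from $S'$ to $S$, and since $|S' \cap M| = |S_\ell \cap M| \ge 2$, the proof of Lemma~\ref{lem:TS-at-least-2} (applied to this reversed sequence) shows that \emph{every} set in the sequence avoids $N(M)$; but that proof crucially also used that the module stays nonempty throughout — in fact the argument shows inductively that once $|S_i \cap M| \ge 2$ no token can be removed from $M$ to a vertex of $N(M)$, and by symmetry of the slide no token outside can enter $M$ either. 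I would instead phrase it as: if $|S' \cap M| \ge 2$ then by Lemma~\ref{lem:TS-at-least-2} applied to $S'$ we get $S \cap N(M) = \emptyset$ and $S' \stackrel{G-N(M)}{\sevstep} S$; working inside $G - N(M)$, the set $M$ becomes a union of components (no external neighbors remain), so by Observation~\ref{obs:TS-component} the number of tokens inside $M$ is an invariant of the reconfiguration, giving $|S \cap M| = |S' \cap M| \ge 2$, contradicting $|S \cap M| \le 1$.

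So the key steps, in order, are: (1) assume $S \sevstep S'$ and, for contradiction, $|S' \cap M| \ge 2$; (2) invoke Lemma~\ref{lem:TS-at-least-2} with $S'$ playing the role of its hypothesis set, obtaining $S \cap N(M) = \emptyset$ and a reconfiguration $S' \sevstep S$ inside $G - N(M)$; (3) observe that in $G - N(M)$ the module $M$ has no outside neighbors, hence $M$ is a disjoint union of connected components of $G - N(M)$; (4) apply Observation~\ref{obs:TS-component} to conclude that the total token count inside $M$ is preserved along any $\TS$ sequence in $G - N(M)$, so $|S \cap M| = |S' \cap M| \ge 2$; (5) contradict the hypothesis $|S \cap M| \le 1$.

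I do not expect any real obstacle here — the statement is essentially a restatement of Lemma~\ref{lem:TS-at-least-2}. The only mild subtlety is making sure the direction of the biconditional in Lemma~\ref{lem:TS-at-least-2} is used correctly: the lemma's hypothesis is on the \emph{starting} set, so one must start the reconfiguration from $S'$ (legitimate since $\sevstep$ is symmetric) rather than from $S$. Once that is in place, steps (3)–(4) are immediate from the fact that deleting $N(M)$ isolates $M$ from the rest of the graph, and the proof is complete.
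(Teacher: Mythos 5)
Your argument is correct and matches the paper's (implicit) reasoning: the paper states that Corollary~\ref{cor:TS-at-most-1} follows from Lemma~\ref{lem:TS-at-least-2} without writing out the details, and your clean version (invoke the lemma at $S'$ using symmetry of $\sevstep$, note that $M$ becomes a union of components of $G-N(M)$, then use Observation~\ref{obs:TS-component} to conclude the token count in $M$ is invariant) is exactly how that implication works. The discarded middle paragraph of your proposal is unnecessary; the numbered steps (1)--(5) at the end constitute a complete and correct proof.
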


We now show that a module sharing at most one vertex with both initial and target independent sets can be replaced with a single vertex,
under an assumption that we may solve a slightly generalized reachability problem (which is still trivial on a graph of constant size).

\begin{lemma}
\label{lem:TS-at-most-1}
Let $G$ be a graph, $M$ a module of $G$ with $|M| \ge 2$, and $S, S'$ independent sets of $G$ with $|S| = |S'|$.
If $|M \cap (S \cup S')| \le 1$, then $S \stackrel{G}{\sevstep} S'$ if and only if $S \stackrel{G - v}{\sevstep} S'$
for every $v \in M \setminus (S \cup S')$.
\end{lemma}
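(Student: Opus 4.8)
The plan is to establish the nontrivial (``only if'') direction via the following sub-claim: \emph{any $\TS$-reconfiguration from $S$ to $S'$ in $G$ can be converted into a $\TS$-reconfiguration from $S$ to $S'$ in $G$ none of whose independent sets contains $v$.} Such a sequence is automatically a valid $\TS$-sequence in the induced subgraph $G-v$, and the reverse implication is immediate because a $\TS$-sequence in an induced subgraph is also one in the host graph. To set up the sub-claim, fix $v \in M \setminus (S \cup S')$. From $|M \cap (S \cup S')| \le 1$ we get $|S \cap M| \le 1$, so Corollary~\ref{cor:TS-at-most-1} guarantees that every independent set $S_m$ occurring in a $\TS$-sequence starting at $S$ satisfies $|S_m \cap M| \le 1$; since $v \in M$, the vertex $v$ can belong to $S_m$ only as the unique element of $S_m \cap M$.

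Given a $\TS$-sequence $S = S_0, \dots, S_\ell = S'$, I would partition its indices into the maximal ``occupied'' intervals, on which $S_m \cap M \neq \emptyset$, together with the remaining indices. Vertex $v$ never appears at a non-occupied index, so it suffices to reroute inside each occupied interval $[i,j]$ so that the single vertex of the independent set lying in $M$ is always one fixed vertex $c \in M \setminus \{v\}$: concretely, replace each $S_m$ with $i \le m \le j$ by $(S_m \setminus M) \cup \{c\}$, and leave every other set unchanged. The value of $c$ is forced at the two ends of the sequence: if $S \cap M \neq \emptyset$, the occupied interval containing index $0$ must use the (unique, and necessarily $\neq v$) vertex of $S \cap M$, and symmetrically at index $\ell$; for every other occupied interval pick any $c \in M \setminus \{v\}$, which exists because $|M| \ge 2$. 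Checking that the rerouted sequence is still a valid $\TS$-sequence is routine and uses only that $M$ is a module, i.e.\ every vertex of $M$ has the same neighbourhood $N(M)$ outside $M$: a step entering or leaving $M$ through a vertex of $M$ stays valid when rerouted through $c$; a step with both endpoints outside $M$ is untouched; and a step sliding a token \emph{within} $M$ becomes ``no step'' since the two rerouted sets coincide. Each rerouted set has the same size as the original one, and contracting consecutive repeated sets yields a $\TS$-sequence from $S$ to $S'$ in $G$, hence in $G-v$, avoiding $v$ entirely.

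I expect the one genuinely delicate point to be the case where the token sitting in $M$ never leaves $M$ during the whole run, i.e.\ the occupied interval is all of $[0,\ell]$: there the rerouting is simultaneously forced to use the vertex of $S \cap M$ at the start and the vertex of $S' \cap M$ at the end, so the argument goes through only if these are the same vertex --- which they are, precisely because $M \cap (S \cup S')$ has at most one element. This is exactly where the hypothesis $|M \cap (S \cup S')| \le 1$ is essential, and it also explains its necessity: without it, $v$ could be a cut vertex of $G[M]$ whose removal disconnects the start and target positions of the token inside $M$, so deleting $v$ would change the answer. Once this point is pinned down, the remaining case analysis (steps entering $M$, leaving $M$, staying inside $M$, or disjoint from $M$) is mechanical, invoking only the module property of $M$ and Corollary~\ref{cor:TS-at-most-1}.
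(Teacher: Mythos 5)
Your proposal is correct and takes essentially the same route as the paper's proof: both reroute the (at most one, by Corollary~\ref{cor:TS-at-most-1}) token of $M$ onto a fixed representative in $M \setminus \{v\}$, replacing each $S_m$ with $S_m \cap M \neq \emptyset$ by $(S_m \setminus M) + c$, and verify by the module property that slides into/out of $M$ remain valid while slides within $M$ collapse to identity moves. The paper picks a single representative $u$ (the unique vertex of $M \cap (S \cup S')$, or arbitrary if that set is empty) for all of $[0,\ell]$, which makes the endpoint-consistency you flag as the ``delicate point'' automatic; your per-occupied-interval choice of $c$ is a harmless generalization of the same idea.
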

\begin{proof}
If $S \stackrel{G - v}{\sevstep} S'$, then $S \stackrel{G}{\sevstep} S'$
since $G - v$ is an induced subgraph of $G$.

To prove the only-if part, assume that $S \stackrel{G}{\sevstep} S'$.
Let $S_{0}, \dots, S_{\ell}$ be a $\TS$ sequence from $S_{0} = S$ to $S_{\ell} = S'$.
If $M \cap (S \cup S')$ is nonempty,
let $u$ be the unique vertex in $M \cap (S \cup S')$;
otherwise, let $u$ be an arbitrary vertex in $M - v$.
For $0 \le i \le \ell$, we define $T_{i}$ as follows:
if $S_{i} \cap M = \emptyset$, then $T_{i} = S_{i}$; 
otherwise, $T_{i} = (S_{i} \setminus M) + u$.
Observe that $T_{0} = S$ and $T_{\ell} = S'$.
We show that $T_{i-1} \stackrel{G-v}{\sevstep} T_{i}$ for each $i \in [\ell]$.
By Corollary~\ref{cor:TS-at-most-1}, $|M \cap S_{i}| \le 1$ holds for all $i$.
This implies that $|T_{i}| = |S_{i}|$
and that $T_{i}$ is an independent set of $G-v$ since $N_{G}(w) \setminus M = N_{G}(u) \setminus M$ for all $w \in M$.

Let $S_{i-1} \setminus S_{i} = \{x\}$ and $S_{i} \setminus S_{i-1} = \{y\}$.
Note that $\{x,y\} \in E(G)$.
If $x, y \in M$, then $T_{i-1} = S_{i-1} - x + u = S_{i} - y + u = T_{i}$,
and thus $T_{i-1} \stackrel{G-v}{\sevstep} T_{i}$.
If $x, y \notin M$, then $T_{i-1} \setminus T_{i} = \{x\}$
and $T_{i} \setminus T_{i-1} = \{y\}$.
Since $\{x,y\} \in E(G-v)$, we have $T_{i-1} \stackrel{G-v}{\onestep} T_{i}$.
In the remaining case, we may assume by symmetry that $x \in M$ and $y \notin M$.
Since $\{x,y\} \in E(G)$, $u,x \in M$, and $v \notin \{u,y\}$, we have $\{u,y\} \in E(G-v)$.
Since $T_{i-1} \setminus T_{i} = \{u\}$ and $T_{i} \setminus T_{i-1} = \{y\}$,
we have $T_{i-1} \stackrel{G-v}{\onestep} T_{i}$.
\qed
\end{proof}

\begin{lemma}
\label{lem:TS-both1}
Let $G$ be a graph, $M$ a module of $G$, and $S,S'$ independent sets of $G$ with $|S| = |S'|$.
If $M \cap S = \{u\}$, $M \cap S' = \{v\}$,  $u \ne v$, and $u$ and $v$ are in the same component of $G[M]$,
then $S \stackrel{G}{\sevstep} S'$ if and only if $S \stackrel{G-v}{\sevstep} S'-v+u$.
\end{lemma}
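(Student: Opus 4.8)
The plan is to mimic the proof of Lemma~\ref{lem:TS-at-most-1}, but instead of merely ``hiding'' the single $M$-token we \emph{pin} it to the fixed vertex $u$ whenever a configuration in the sequence meets $M$, and to add one extra ingredient for the \emph{if} direction: sliding a token along a path of $G[M]$ from $u$ to $v$.

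For the \emph{if} direction, suppose $S \stackrel{G-v}{\sevstep} S'-v+u$. Since $G-v$ is an induced subgraph of $G$, this already gives $S \stackrel{G}{\sevstep} S'-v+u$, so it suffices to show $S'-v+u \stackrel{G}{\sevstep} S'$. Here I would use the hypothesis that $u$ and $v$ lie in the same component of $G[M]$: fix a path $u = w_0, w_1, \dots, w_t = v$ in $G[M]$ and consider the configurations $R_i := (S' \setminus \{v\}) + w_i$. Because $M$ is a module, $N_{G}(w_i) \setminus M = N_{G}(v) \setminus M$, and since $S'$ is independent with $v \in S'$ and $S' \setminus \{v\} = S' \setminus M$ lies entirely outside $M$, each $R_i$ is an independent set of $G$; consecutive $R_{i-1}, R_i$ differ by a slide along the edge $\{w_{i-1}, w_i\} \in E(G[M]) \subseteq E(G)$. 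Thus $S'-v+u = R_0 \sevstep R_t = S'$, and composing yields $S \stackrel{G}{\sevstep} S'$.

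For the \emph{only-if} direction, suppose $S \stackrel{G}{\sevstep} S'$ via a $\TS$ sequence $S_0 = S, S_1, \dots, S_\ell = S'$. By Corollary~\ref{cor:TS-at-most-1}, $|S_i \cap M| \le 1$ for all $i$. Define $T_i := S_i$ if $S_i \cap M = \emptyset$ and $T_i := (S_i \setminus M) + u$ otherwise, so that $T_0 = S$ and $T_\ell = S'-v+u$. I would first check, again via the module property, that every $T_i$ is an independent set of $G-v$ with $|T_i| = |S_i|$ (note $v \notin T_i$ since $T_i \cap M \subseteq \{u\}$ and $u \ne v$). Then, for each step $i$, write $S_{i-1} \setminus S_i = \{x\}$, $S_i \setminus S_{i-1} = \{y\}$ with $\{x,y\} \in E(G)$, and argue by cases on how $x,y$ meet $M$: if $x,y \in M$ then $T_{i-1} = T_i$; if $x,y \notin M$ then $T_{i-1} \onestep T_i$ in $G-v$ along the same edge; and if exactly one of them lies in $M$ — say $x \in M$, so the $M$-token is $x$ and $y \in N_{G}(M)$ — then the move ``$x \to y$'' is replaced by ``$u \to y$'', which is a legal slide in $G-v$ because $y$, being in $N_G(M)$, is adjacent to every vertex of $M$ including $u$, so $\{u,y\} \in E(G-v)$; the subcase $y \in M$ is symmetric. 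Concatenating gives $S = T_0 \stackrel{G-v}{\sevstep} T_\ell = S'-v+u$.

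The recurring, routine point is that a vertex of $M$ outside a configuration may be swapped for any other vertex of $M$ without destroying independence, and that every vertex of $M$ sees all of $N_G(M)$; this is what makes the re-routed slides legal after deleting $v$. The only genuinely new step compared with Lemma~\ref{lem:TS-at-most-1} is the path argument in the \emph{if} direction, which is precisely where ``$u$ and $v$ are in the same component of $G[M]$'' is needed. The main thing to be careful about is the bookkeeping when the token starting at $u$ leaves $M$ entirely at some point (so $S_i \cap M = \emptyset$): the case analysis on the boundary steps that enter and leave $M$ is where this is handled, and it is the one place where a superficial adaptation of the earlier proof could go wrong.
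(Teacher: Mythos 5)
Your proof is correct and takes essentially the same approach as the paper: the path slide inside $G[M]$ handles the ``if'' direction, and pinning the $M$-token to $u$ handles the ``only-if'' direction. The only cosmetic difference is that the paper first shows $S'\!-\!v\!+\!u \stackrel{G}{\sevstep} S'$ and then invokes Lemma~\ref{lem:TS-at-most-1} as a black box on the pair $(S, S'\!-\!v\!+\!u)$, whereas you re-derive the pinning argument inline for the ``only-if'' direction.
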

\begin{proof}
Let $P = (p_{0}, \dots, p_{q})$ be a $u$--$v$ path in $G[M]$, where $p_{0} = u$ and $p_{q} = v$.
Let $T_{0} = S'-v+u$ and $T_{i} = T_{i-1} - p_{i-1} + p_{i}$ for $i \ge 1$.
Clearly, $|\symdiff{T_{i-1}}{T_{i}}| = 2$ for $i \in [q]$.
Each $T_{i}$ is an independent set of $G$ since $T_{0}$ is an independent set of $G$,
$T_{i} = T_{0} - u + p_{i}$, and $N_{G}(u) \setminus M = N_{G}(p_{i}) \setminus M$ as $u$ and $p_{i}$ are in the same module $M$.
Since $\symdiff{T_{i-1}}{T_{i}} = \{p_{i-1},p_{i}\}$ and $\{p_{i-1},p_{i}\} \in E(G)$ for each $i \in [q]$,
we have $T_{i-1} \stackrel{G}{\onestep} T_{i}$.
As $T_{0} = S'-v+u$ and $T_{q} = S'$, we have $S'-v+u \stackrel{G}{\sevstep} S'$.
Hence, we can conclude that $S \stackrel{G}{\sevstep} S'$ if and only if $S \stackrel{G}{\sevstep} S'-v+u$.
On the other hand, since $|M| \ge 2$ and $M \cap (S \cup (S'-v+u))) = \{u\}$,
Lemma~\ref{lem:TS-at-most-1} implies that
$S \stackrel{G}{\sevstep} S'-v+u$ if and only if $S \stackrel{G-v}{\sevstep} S'-v+u$.
Putting them together, 
we obtain that $S \stackrel{G}{\sevstep} S'$ if and only if $S \stackrel{G-v}{\sevstep} S'-v+u$. 
\qed
\end{proof}

\begin{lemma}
\label{lem:TS-both2}
Let $G$ be a graph, $M$ a module of $G$, and $S, S'$ independent sets of $G$ with $|S| = |S'|$.
If $M \cap S = \{u\}$, $M \cap S' = \{v\}$,  $u \ne v$, and $u$ and $v$ are in different components of $G[M]$,
then $S \stackrel{G}{\sevstep} S'$ if and only if 
$S \stackrel{G-v}{\sevstep} S'-v+u$
and there is an independent set $T$ in $G-v$
such that $T \cap M = \emptyset$ and $S \stackrel{G-v}{\sevstep} T$.
\end{lemma}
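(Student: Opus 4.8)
The plan is to prove the two directions separately, in both cases relying on the ``representative‑swapping'' device already used in Lemmas~\ref{lem:TS-at-most-1} and~\ref{lem:TS-both1}: by Corollary~\ref{cor:TS-at-most-1} every set along a $\TS$ sequence starting at $S$ or at $S'$ meets $M$ in at most one vertex, so we may replace the unique vertex of $M$ in each such set (when present) by any fixed vertex $w\in M$, and — since $N_G(x)\setminus M=N_G(w)\setminus M$ for all $x\in M$ — the result is again a legal $\TS$ sequence in which only $w$ is ever used inside $M$. The conceptual point is that an independent set $T$ with $T\cap M=\emptyset$ is exactly what lets us ``ferry'' the token between the two components of $G[M]$ that contain $u$ and $v$; this is why such a $T$ appears as an extra hypothesis here but is unnecessary in Lemma~\ref{lem:TS-both1}.

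For the only‑if direction, suppose $S\stackrel{G}{\sevstep}S'$ via $S_0=S,\dots,S_\ell=S'$. A single $\TS$ step relates two vertices of $M$ only if they are adjacent in $G[M]$, so as long as the $M$‑token is never removed it stays within one component of $G[M]$; since $u$ and $v$ lie in different components, some $S_j$ has $S_j\cap M=\emptyset$. Choosing $j$ minimal, the $M$‑token of $S_0,\dots,S_{j-1}$ never leaves the component of $u$, so $v\notin S_i$ for all $i\le j$ and $S_0,\dots,S_j$ is a $\TS$ sequence in $G-v$; hence $T:=S_j$ witnesses the second required condition. Running the identical argument on the reversed sequence (from $S'$ to $S$, with $v$ in the role of $u$) gives the largest index $m$ with $S_m\cap M=\emptyset$ such that $S_m,\dots,S_\ell$ avoids $u$, i.e.\ $S'\stackrel{G-u}{\sevstep}S_m$. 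Swapping the $M$‑representative in this sequence from the component of $v$ to $u$ turns it into a $\TS$ sequence in $G$ from $S'-v+u$ to $S_m$, so $S'-v+u\stackrel{G}{\sevstep}S_m\stackrel{G}{\sevstep}S'$; together with $S\stackrel{G}{\sevstep}S'$ this yields $S\stackrel{G}{\sevstep}S'-v+u$, and then Lemma~\ref{lem:TS-at-most-1} (applicable since $|M|\ge 2$ and $M\cap(S\cup(S'-v+u))=\{u\}$) upgrades this to $S\stackrel{G-v}{\sevstep}S'-v+u$.

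For the if direction, assume $S\stackrel{G-v}{\sevstep}S'-v+u$ and that some independent set $T$ of $G-v$ with $T\cap M=\emptyset$ satisfies $S\stackrel{G-v}{\sevstep}T$. The first hypothesis already gives $S\stackrel{G}{\sevstep}S'-v+u$, and since $\sevstep$ in $G-v$ is an equivalence relation we also get $S'-v+u\stackrel{G-v}{\sevstep}T$. Now swap the $M$‑representative along this last sequence from $u$ to $v$: since $T\cap M=\emptyset$ its endpoint $T$ is unchanged, while the other endpoint becomes $((S'-v+u)\setminus M)+v=S'$, so we obtain a $\TS$ sequence in $G$ proving $S'\stackrel{G}{\sevstep}T$. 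Combining with $S\stackrel{G}{\sevstep}T$ (a consequence of $S\stackrel{G-v}{\sevstep}T$) and transitivity, $S\stackrel{G}{\sevstep}S'$.

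The steps I expect to need the most care are essentially routine: verifying that every swapped set is still an independent set of $G$ (resp.\ of $G-v$) and that consecutive swapped sets differ by a legal $\TS$ move — this is the same bookkeeping as in the proof of Lemma~\ref{lem:TS-both1}, and the one point to watch is that the move adjacent to the ``empty‑$M$'' configuration turns an edge between $M$ and $V\setminus M$ into an edge incident to the new representative, which is legitimate precisely because all vertices of $M$ share their neighbourhood outside $M$. Keeping track of which of $u$ and $v$ is forbidden on which portion of each sequence is the only genuinely fiddly part.
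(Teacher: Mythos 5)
Your proof is correct and follows essentially the same route as the paper's: both identify an index with $S_j \cap M = \emptyset$ (forced by $u,v$ lying in different components of $G[M]$), use that configuration as the witness $T$, and rely on the representative-swapping device to move between the $u$- and $v$-representatives inside $M$. Your only-if argument is in fact a bit more careful than the paper's, which attributes the step ``$T \stackrel{G-v}{\sevstep} S'-v+u$'' directly to Lemma~\ref{lem:TS-at-most-1} (and contains a typo $S-v+u$ for $S'-v+u$); your detour through $S\stackrel{G}{\sevstep}S'-v+u$ followed by an application of Lemma~\ref{lem:TS-at-most-1} makes that step fully explicit.
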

\begin{proof}
We first show the only-if part. Assume that $S \stackrel{G}{\sevstep} S'$.
Let $S_{0}, \dots, S_{\ell}$ be a $\TS$ sequence from $S_{0} = S$ to $S_{\ell} = S'$.
Observe that there is an index $i$ such that $S_{i} \cap M = \emptyset$.
This is because, otherwise the vertex in $S_{i} \cap M$ is always in the same component of $G[M]$ that contains $u$ and thus cannot reach $v$.
We set $T = S_{i}$.
By Lemma~\ref{lem:TS-at-most-1}, we have $S \stackrel{G-v}{\sevstep} T$.
We also have $T \stackrel{G-v}{\sevstep} S-v+u$ by Lemma~\ref{lem:TS-at-most-1},
and hence $S \stackrel{G-v}{\sevstep} S-v+u$.

To show the if part, assume that $S \stackrel{G-v}{\sevstep} S'-v+u$
and $S \stackrel{G-v}{\sevstep} T$ for some independent set $T$ in $G-v$ with $T \cap M = \emptyset$.
These assumptions imply that $S \stackrel{G}{\sevstep} T$ and $T \stackrel{G}{\sevstep} S'-v+u$.
Let $S_{0}, \dots, S_{\ell}$ be a $\TS$ sequence from $S_{0} = T$ to $S_{\ell} = S'-v+u$ in $G$.
For each $i$, we set $T_{i} = S_{i}$ if $S_{i} \cap M = \emptyset$; otherwise, we set $T_{i} = (S_{i} \setminus M) + v$.
Note that $T_{0} = T$ and $T_{\ell} = S'$. We show that $T_{i-1} \stackrel{G}{\sevstep} T_{i}$ for each $i \in [\ell]$.
Corollary~\ref{cor:TS-at-most-1} implies that $|M \cap S_{i}| \le 1$ holds for all $i$, and thus $|T_{i}| = |S_{i}|$.
We can see that $T_{i}$ is an independent set of $G$
since $N_{G}(v) \setminus M = N_{G}(w) \setminus M$ for all $w \in M$.

Let $S_{i-1} \setminus S_{i} = \{x\}$ and $S_{i} \setminus S_{i-1} = \{y\}$.
Note that $\{x,y\} \in E(G)$.
If $x, y \in M$, then $T_{i-1} = S_{i-1} - x + v = S_{i} - y + v = T_{i}$,
and thus $T_{i-1} \stackrel{G}{\sevstep} T_{i}$.
If $x, y \notin M$, then $T_{i-1} \setminus T_{i} = \{x\}$ and $T_{i} \setminus T_{i-1} = \{y\}$.
We have $T_{i-1} \stackrel{G}{\onestep} T_{i}$ as $\{x,y\} \in E(G)$.
Now by symmetry assume that $x \in M$ and $y \notin M$.
Since $\{x,y\} \in E(G)$ and $v,x \in M$, we have $\{v,y\} \in E(G)$.
Since $T_{i-1} \setminus T_{i} = \{v\}$ and $T_{i} \setminus T_{i-1} = \{y\}$,
we have $T_{i-1} \stackrel{G}{\onestep} T_{i}$.
\qed
\end{proof}

Now we are ready to present our algorithm for the $\TS$-reachability problem.
\begin{theorem}
[$\TS$]
\label{thm:TS}
There is an algorithm which, given a graph $G$ and two independent sets $S,T$,
decides if $S \sevstep T$ in time $O^*(2^{\mw(G)})$.
\end{theorem}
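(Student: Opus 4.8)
The plan is to follow the reduction strategy outlined in the section's introduction, using the lemmas already developed to repeatedly shrink the instance until it has size at most $\mw(G)$, at which point reachability can be decided by brute force. First I would handle disconnectedness: by Observation~\ref{obs:TS-component}, $S \sevstep T$ holds if and only if $(S \cap V(C_i)) \sevstep (S' \cap V(C_i))$ in $G[V(C_i)]$ for each component $C_i$, so I would recurse on each component separately and answer yes if and only if all recursive calls return yes. This reduces the problem to the connected case.

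Next, on a connected graph $G$ with $|V(G)| > \mw(G)$, I would compute a modular decomposition giving a nontrivial partition of $V(G)$ into $r \le \mw(G)$ modules $V_1, \dots, V_r$, and pick some module $M = V_i$ with $|M| \ge 2$ (such a module exists since $r < |V(G)|$). I would then branch on the sizes $|S \cap M|$ and $|S' \cap M|$. If $|S \cap M| \ge 2$ or $|S' \cap M| \ge 2$, then Lemma~\ref{lem:TS-at-least-2} (together with Corollary~\ref{cor:TS-at-most-1}, which forces the two sides to agree on whether they intersect $M$ in $\ge 2$ vertices) lets me either reject immediately or replace $G$ by $G - N(M)$, a strictly smaller instance. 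If $|M \cap (S \cup S')| \le 1$, then Lemma~\ref{lem:TS-at-most-1} lets me delete a vertex $v \in M \setminus (S \cup S')$ and recurse on $G - v$. If $M \cap S = \{u\}$ and $M \cap S' = \{v\}$ with $u \ne v$: when $u,v$ lie in the same component of $G[M]$, Lemma~\ref{lem:TS-both1} reduces to deciding $S \sevstep S' - v + u$ in $G - v$; when they lie in different components of $G[M]$, Lemma~\ref{lem:TS-both2} reduces to the conjunction of (a) $S \sevstep S' - v + u$ in $G - v$ and (b) the existence of an independent set $T$ in $G - v$ with $T \cap M = \emptyset$ and $S \sevstep T$, where the latter can be decided by invoking the TAR-based machinery — specifically Lemma~\ref{lem:empty} adapted to the $\TS$ setting, or more directly by a recursive reachability-to-empty subroutine — in time $O^*(2^{\mw})$. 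Finally, if $M \cap S = M \cap S' = \{u\}$ for the same vertex $u$, this is again covered by Lemma~\ref{lem:TS-at-most-1} since $|M \cap (S \cup S')| = 1$. In every branch the instance strictly shrinks, so after polynomially many steps we reach a graph of order at most $\mw(G)$, where we decide reachability by building the (constant-size) configuration graph explicitly.

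The running-time bound follows because each reduction step costs $O^*(2^{\mw(G)})$ — the only expensive operations are computing a modular decomposition, computing maximum independent sets of modules, and the calls handling the "different components" case via the empty-module reachability subroutine — and the total number of steps is polynomial in $n$: each step either removes a vertex or splits into components and recurses, and the component-splitting incurs only a single recursive call per component. I would be slightly careful that $\mw$ does not increase under vertex deletion (stated in the preliminaries) so the parameter in all recursive calls stays bounded by $\mw(G)$.

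The main obstacle I anticipate is the "different components of $G[M]$" case of Lemma~\ref{lem:TS-both2}, because checking condition (b) — whether some reachable independent set avoids $M$ entirely — is not a pure $\TS$-reachability question between two fixed sets, so it cannot simply be fed back into the recursion as a smaller $\TS$ instance. I expect to resolve this by noting that condition (b) is exactly the kind of "can we empty a module" query handled by Lemma~\ref{lem:empty} in the $\TAR$ section, and establishing the analogous statement for $\TS$ (the proof is analogous: restrict attention to $G - N(M)$ is not available here since $|S \cap M| = 1$, so instead one argues directly about shortest sequences and the structure of $G[M]$'s components, then reduces module size via Lemma~\ref{lem:delete2}'s analogue). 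Once that subroutine is in place, the rest is bookkeeping: verifying each branch preserves the answer (immediate from the cited lemmas) and that the measure decreases.
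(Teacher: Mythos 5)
Your outline matches the paper's high-level strategy: split into connected components via Observation~\ref{obs:TS-component}, then on a connected graph with a nontrivial module partition, use Lemma~\ref{lem:TS-at-least-2} together with Corollary~\ref{cor:TS-at-most-1} to handle the $|S\cap M|\ge 2$ case, and Lemmas~\ref{lem:TS-at-most-1}, \ref{lem:TS-both1}, \ref{lem:TS-both2} to shrink modules, until the residual graph has at most $\mw(G)$ vertices. You also correctly isolate the crux: condition~(b) in Lemma~\ref{lem:TS-both2} (``does there exist a reachable $T$ in $G-v$ with $T\cap M=\emptyset$'') is not a two-endpoint $\TS$-reachability question and so cannot simply be fed back into the recursion.

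However, your proposed resolution of that crux is where there is a genuine gap. You suggest a $\TS$-analogue of Lemma~\ref{lem:empty}, but that lemma is built entirely on $\TAR$-specific machinery: it computes $\lambda(H,S,k)$ (the maximum reachable size), reaches that large set, and then \emph{deletes} the $M$-part. Under $\TS$ no such $\lambda$ exists (sizes are fixed) and one cannot simply drop tokens, so the argument does not transfer. Your own aside that ``restrict attention to $G-N(M)$ is not available here since $|S\cap M|=1$'' flags the difficulty without resolving it, and the promised substitute (``argues directly about shortest sequences and the structure of $G[M]$'s components, then reduces module size via Lemma~\ref{lem:delete2}'s analogue'') is not an argument; it is the statement of a new subroutine whose correctness, termination, and running time are all left open, and which itself would have to answer a ``reach any configuration in a set of targets'' query rather than a point-to-point one.

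The paper's actual resolution is different and avoids this entirely: it \emph{defers} every condition-(b) check. Each time Lemma~\ref{lem:TS-both2} is applied, the module $M_i$ involved is recorded in a list $\mathcal{L}$, and the algorithm keeps reducing (still via Lemmas~\ref{lem:TS-at-most-1}, \ref{lem:TS-both1}, \ref{lem:TS-both2}) until every module is a single vertex $v_i$ and the graph $H$ has at most $\mw(G)$ vertices. Only then does it build the $\TS$ configuration graph $\mathcal{G}$ on size-$|S'|$ independent sets of $H$, compute the component $\mathcal{C}$ containing the reduced set $S'$, and accept iff $T'\in\mathcal{C}$ and, for each $M_i\in\mathcal{L}$, some $Z\in\mathcal{C}$ has $v_i\notin Z$. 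Reaching a configuration avoiding a module thus becomes a trivial lookup in the explicitly-built configuration graph, no new subroutine required. That deferral is the missing idea in your proposal; without it (or an actual construction of your hypothesized subroutine) the proof does not close.
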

\begin{proof}
We first check the size of each independent set and return ``no'' if $|S| \ne |T|$.
We return ``yes'' if $|S| = |T| = 0$. Otherwise, we check the connectivity of $G$.
If $G$ is not connected, then we can solve each component independently by Observation~\ref{obs:TS-component}.
We return ``yes'' if and only if all executions on the components return ``yes.''

Now assume that $G$ is connected. We compute a modular decomposition of $G$
which gives a partition of $V$ into $r \le \mw(G)$ modules $M_{1}, \dots, M_{r}$.
(We may assume that $r \ge 2$ since otherwise $G$ has at most $\mw(G)$
vertices and the claimed running time is trivial in that case.)
We then check whether there is $i \in [r]$ such that $|S \cap M_{i}| \ge 2$.
If such an $i$ exists, then by Lemma~\ref{lem:TS-at-least-2}, 
$S \stackrel{G}{\sevstep} T$ if and only if $T \cap N(M_{i}) = \emptyset$ and $S \stackrel{G - N(M_{i})}{\sevstep} T$.
Hence, if $T \cap N(M_{i}) \ne \emptyset$, then we return ``no'';
otherwise, we recursively check whether $S \stackrel{G - N(M_{i})}{\sevstep} T$.

In the following, we assume that the instance is not caught by the tests above.
That is, $G$ is connected and $|S \cap M_{i}| \le 1$ for each $i \in [r]$.
We also assume that $|T \cap M_{i}| \le 1$ holds for each $i \in [r]$,
since otherwise the answer is ``no'' by Corollary~\ref{cor:TS-at-most-1}.

We then exhaustively apply Lemmas~\ref{lem:TS-at-most-1}, \ref{lem:TS-both1}, \ref{lem:TS-both2} to remove ``irrelevant'' vertices.
When we apply Lemma~\ref{lem:TS-both2}, we remember which module is involved
as we need it later for the generalized reachability test.
After all, we end up with a reduced instance where each module is of size 1
and the list of modules that are used for applying Lemma~\ref{lem:TS-both2}.
Let $H$ be the reduced graph with the vertex set $\{v_{1}, \dots, v_{r}\}$ where $v_{i} \in M_{i}$,
$S', T'$ the modified independent sets,
and $\mathcal{L}$ the list  of modules used in Lemma~\ref{lem:TS-both2}.
We construct the auxiliary graph $\mathcal{G}$ as follows:
we set the vertex set $V(\mathcal{G})$ to be the set of all size-$|S'|$ independent sets in $H$.
Two sets $X, Y \in V(\mathcal{G})$ are adjacent in $\mathcal{G}$ if and only if $X \stackrel{H}{\onestep} Y$.
We then compute the component $\mathcal{C}$ of $\mathcal{G}$ that contains $S'$.
We return ``yes'' if 
\begin{itemize}
  \item $\mathcal{C}$ contains $T'$, and
  \item for each module $M_{i}$ in $\mathcal{L}$, there is $Z \in \mathcal{C}$ with $Z \cap M_{i} = \emptyset$
  (equivalently, $v_{i} \notin Z$).
\end{itemize}
Otherwise, we return ``no.''
Note that $|V(\mathcal{G})| \le \binom{r}{|S'|} \le 2^{r}$
and $|E(\mathcal{G})| \le |V(\mathcal{G})| \cdot |V(H)| \le 2^{r} r$.

The correctness of the algorithm follows directly from the correctness of facts used
(that is, Observation~\ref{obs:TS-component},
  Lemmas~\ref{lem:TS-at-least-2}, \ref{lem:TS-at-most-1}, \ref{lem:TS-both1}, \ref{lem:TS-both2},
  and Corollary~\ref{cor:TS-at-most-1}).
We next consider the running time.
The algorithm first reduces the instance to a collection of instances of size $O(r)$.
This phase runs in time polynomial in the number of vertices of $G$.
Then the algorithm solves each instance in time $O^{*}(2^{r})$.
Thus the total running time is $O^{*}(2^{r})$.
\qed
\end{proof}


\bibliographystyle{abbrv}
\bibliography{mw}

\end{document}